\def\endthebibliography{%
  \def\@noitemerr{\@latex@warning{Empty `thebibliography' environment}}%
  \endlist
}
\newtheorem{lemma}{Lemma}
\newcommand{\com}[1]{{\color{black}#1}} 
\newcommand{\com}[1]{}
\begin{document}
\title{{ Data-Driven Online Resource Allocation for User Experience Improvement in Mobile Edge Clouds}}
\author{\IEEEauthorblockN{Liqun Fu, ~\IEEEmembership{Senior Member,~IEEE},
Jingwen Tong,  ~\IEEEmembership{Member,~IEEE},
Tongtong Lin,
and
Jun Zhang,  ~\IEEEmembership{Fellow,~IEEE}
}
\thanks{Corresponding author: Liqun Fu.
This work was presented in part at IEEE ICC 2021 \cite{lin2021online}.
L. Fu and T. Lin are with the  School of Informatics, Xiamen University, Xiamen 361005, China
(e-mails: liqun@xmu.edu.cn; tongtong@stu.xmu.edu.cn).
J. Tong and J. Zhang are with the Department of Electronic and Computer Engineering, The Hong Kong University of Science and Technology, Hong Kong, China (e-mails: eejwentong@ust.hk; eejzhang@ust.hk).
}}
\maketitle
\begin{abstract}
As the cloud is pushed to the edge of the network, resource allocation for user experience improvement in mobile edge clouds (MEC) is increasingly important and faces multiple challenges.
This paper studies quality of experience (QoE)-oriented resource allocation in MEC while considering user diversity,  limited resources, and the complex relationship between allocated resources and user experience.
We introduce a closed-loop online resource allocation (CORA) framework to tackle this problem. It learns the objective function of resource allocation from the historical dataset and updates the learned model using the online testing results.
Due to the learned objective model is typically non-convex and challenging to solve in real-time,
we leverage the Lyapunov optimization to decouple the long-term average constraint and apply the prime-dual method to solve this decoupled resource allocation problem.
Thereafter, we put forth a data-driven optimal online queue resource allocation (OOQRA) algorithm and a data-driven robust OQRA (ROQRA) algorithm for homogenous and heterogeneous user cases, respectively.
Moreover, we provide a rigorous convergence analysis for the OOQRA algorithm.
We conduct extensive experiments to evaluate the proposed algorithms using the synthesis and YouTube datasets.
Numerical results validate the theoretical analysis and demonstrate that the user complaint rate is reduced by up to $100 \%$ and $18 \%$  in the synthesis and YouTube datasets, respectively.
\end{abstract}

\begin{IEEEkeywords}
Mobile edge clouds (MEC), data-driven online resource allocation, Lyapunov optimization.
\end{IEEEkeywords}

\section{Introduction}\label{I}
With the rapid development of wireless communications, there is a huge increase in the data generated in future networks;
While the advanced storage and computing technologies strongly demand real-time data processing \cite{zhang2019mobile, siriwardhana2021survey}.
Motivated by this trend, mobile edge cloud (MEC), which is also known as mobile edge computing,  has emerged as an effective technology to alleviate these issues by placing multiple edge servers proximate to the base station (BS) \cite{mao2017survey}.
The benefits of the MEC system are twofold: i)  improve bandwidth utilization and energy efficiency for communications;
ii) reduce transmission delay and the demand for backhaul links.
In recent years, MEC has attracted increasing attention in wireless communications, predicted to increase by $26.4 \%$ annually from 2022 to 2026 \cite{Mordor2022}.

However, as the cloud is pushed to the edge of the network, resource allocation for user experience improvement in the MEC system is increasingly important and facing multiple challenges.
Many efforts have been devoted to user-oriented resource allocations, which also refers to the quality of experience (QoE)-based resource allocation. The QoE is a critical performance metric in wireless communications \cite{liu2017mobile} and is defined as the outcome of the  joint considerations of multiple quality of services, such as good throughput, low delay, and high energy efficiency,
Typically, a poor QoE will undermine user experience, resulting in severe profit loss for the MEC operators.
Accounting for this, many works focus on the QoE-oriented resource allocations in MEC with different considerations of the QoE-based video streaming problem in \cite{li2017qoe11, ge2018qoe13, tuysuz2020qoe},  the QoE-based task or computation offloading problem in \cite{he2020qoe16, luo2019qoe33, sivasakthi2022qoe34}, and the QoE-based joint service placement and throughput adjustment problem in \cite{liang2022online}.
The goal of these works is to improve the users' QoE by enabling users to utilize the cloud resource without any requirement of complex hardware and software implementation.

Unfortunately, several practical factors prevent the above QoE-based resource allocations from achieving a desirable performance \cite{jiang2021intelligence, shang2022enabling}.
First, users are typically heterogeneous, where different users may have different resource requirements (or features).
For example, a video streaming device may require high bandwidth and computation capability to reduce the transmission and processing delay;
while an Internet of Things (IoT) device prefers a large memory space to store its sampling data.
Moreover, the QoE may change with time for the same type of resources for those homogeneous users.
For instance, a mobile user typically needs less bandwidth for social media in the daytime but desires a large bandwidth for streaming video in the nighttime.
These features render offline methods inapplicable to QoE-based resource allocations.
An efficient online method is urgent to handle such changing features and user diversity in the MEC system.

Second, the relationship between user experience and resource features is complex.
Existing works usually assume that this causal relationship is well determined and can be formulated as some analytic utility functions \cite{DD2017}.
However, such an assumption cannot directly apply to the MEC system, especially when the environment is dynamic.
On the one hand, user movement often leads to a dynamic objective function and a long delay of service.
For example, users may move from one coverage area of the BS to another.
On the other hand, the MEC systems often involve dynamic and time-varying workloads,
resulting in poor user QoE and a high system burden.
Therefore, an edge server must allocate resources without this knowledge and change its allocation strategies in response to this dynamic feature.

Third, an edge server usually has limited resources constrained by physical limitations such as space, power, and cooling \cite{tuysuz2020qoe}.
The edge servers are typically smaller in size than traditional data centers or cloud servers.
In addition, the limited resources are typically formulated as a long-term constraint to handle the workload variability, fairness, and system stability in the MEC.
In other words,
the average allocated resources of all users should be constrained to a pre-determined value.
To sum up,  the relationship between user experience and resource features needs to be learned from the historical data and adjusted adaptively according to the allocated resources under limited resources.

To shed some light on this problem, we put forth a data-driven online resource allocation mechanism by considering the user diversity and limited resource constraints,
as well as the obscure causal relationship between user experience and allocated resources.
We first introduce a closed-loop online resource allocation (CORA) framework \cite{DG2016, DD2017} to  the homogeneous user case where users have the same sensitivity to the allocated resources.
By using the historical dataset, the CORA framework employs a logistic regression-based classifier \cite{ng2001discriminative} to learn the relationship between user experience and resource features.
Based on the learned model,  it constructs an objective function for this online resource allocation problem.
By solving this problem, the users evaluate the assigned resource.
In return, the evaluation results (i.e., user experiences) are fed back to the classifier to update the learned model (i.e., the causal relationship).

There are two challenges in solving this data-driven resource allocation problem.
First, the limited resource constraint usually appears in a long-term average form, which cannot be directly applied to the online learning method.
To overcome this difficulty, we transform it into a one-slot optimization problem (i.e., the online learning feature) by using the Lyapunov optimization technique \cite{mao2015lyapunov}, which maintains a virtual queue to track the violation of this constraint.
Second, the transformed problem is still computationally intractable due to its non-convex feature.
We propose a data-driven optimal online queue resource allocation (OOQRA) algorithm to solve the transformed problem by using the prime-dual method.
The OOQRA algorithm proceeds sequentially and can efficiently capture the network dynamics.
Moreover, we show that the OOQRA algorithm is guaranteed to converge to the optimal allocation by giving a rigorous convergence analysis.
To sum up, the data-driven online resource allocation mechanism is operated in a closed loop and has the following features:
1) It captures the user's mobility and the time-changing resource requirement;
2) It extracts the valuable model from the historical dataset and dynamically adjusts it through user experiences on the allocated resources.

Furthermore, heterogeneous users will introduce congestion in resource allocation as they may be more sensitive to the allocated resources.
To overcome this challenge, we incorporate a coefficient component to the above CORA framework to estimate the average behavior of this relationship by using the multi-armed bandit (MAB) technique \cite{bubeck2012regret, tong2022age, chen2022over}.
In this MAB problem, the player is the edge server, and the arms are the elements of the coefficient matrix, which describes the relationship between feature value and allocated resources.
On the one hand, the player needs to explore each arm as many times as possible to find the optimal solution;
on the other hand, it requires exploiting the current best arm as many times as possible to reduce its performance loss.
Therefore, we leverage the well-known upper confidential bound (UCB) algorithm  \cite{auer2010ucb}  to learn this coefficient matrix by carefully balancing the above exploration and exploitation dilemma.
Then, we propose a data-driven robust OQRA (ROQRA) algorithm for heterogeneous users based on the improved CORA framework.

The main contributions of this work are summarized below.
\begin{itemize}
	\item We investigate the  QoE-based resource allocation problem in MEC while considering user diversity, limited resources, and the complex causal relationship between user experience and allocated resources.
We introduce the CORA framework to tackle this problem by making full use of the historical dataset and user experiences.
	\item For the case of homogenous users, we propose an OOQRA algorithm to solve this problem by exploiting the prime-dual method and the properties of optimal solutions. In addition, we provide a rigorous convergence analysis for the OOQRA algorithm.
 \item For the case of heterogeneous users,  we first design a coefficient component in the CORA framework to estimate the coefficient matrix by using the UCB algorithm.
 Then, we propose a ROQRA algorithm to solve this resource allocation problem.
   \item Finally, we evaluate the proposed algorithms by conducting extensive experiments on the synthesis dataset and the YouTube dataset.
   The numerical results show that the user complaints rate can be reduced by up to $100\%$ and $18\%$  when the available resources increase for the synthesis and the YouTube datasets, respectively.
\end{itemize}

The rest of this paper is organized as follows.
In Section \ref{II}, we introduce the system model of the QoE-based resource allocation in the MEC system.
Section \ref{III} gives the CORA framework to model the QoE-based resource allocation problem.
The proposed algorithms to solve the above resource allocation problem are given in Section \ref{IV} for the homogenous users and in Section \ref{V} for the heterogeneous users.
The numerical results are presented in Section \ref{VI}, and this paper concludes in Section \ref{VII}.

\begin{figure}[!t]
	\centering
	\includegraphics[width=0.80\columnwidth]{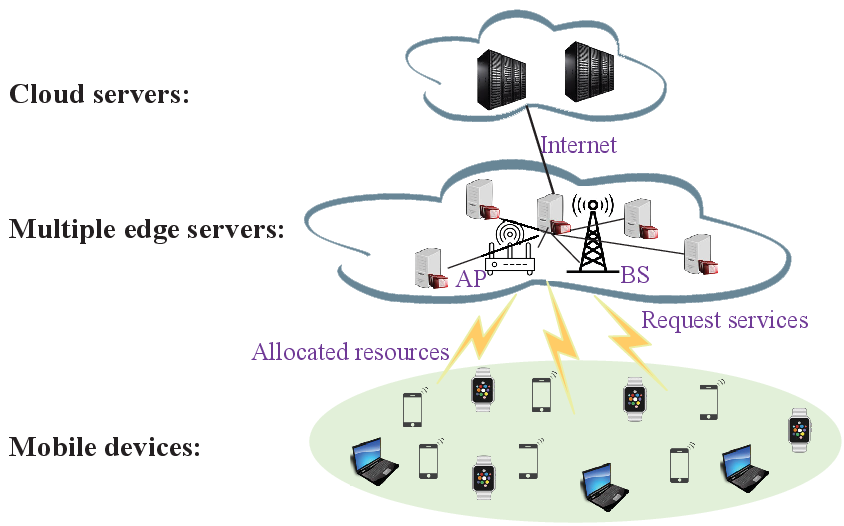}
	\caption{An illustration of the mobile edge could system.}
	\label{cloud}
\end{figure}
\section{System Model}\label{II}
We consider an MEC system, as shown in Fig. \ref{cloud},  which consists of three layers:  mobile devices (or end users), multiple edge servers, and cloud servers.
The mobile devices are connected to a BS or AP, which usually refers to IoT devices or smartphones.
These devices may send requests to the BS for services,
such as  IoT analytics, video streaming, and cloud gaming.
To improve transmission efficiency and reduce computation latency,
the BS will offload service to the edge servers deployed close to the BS
rather than forward them to the remote cloud server.
After receiving the request,
the edge server executes this task by allocating suitable resources to the user.
For example, the allocated resources can be the bandwidth, computation, and storage capability, \textit{etc}.
Note that a cloud server usually deploys at the edge of the mobile network to further extend the computing capabilities, as shown in Fig. \ref{cloud}.

We assume that time is slotted in $t=1, 2,\ldots, T$.
At each time slot $t$, there are $S_t=\{1,2,\ldots \}$ users requesting service from the edge servers as users' requests are generated randomly in the cellular network and some computing tasks are generated periodically, such as the IoT analytic task.
This indicates that at least one user requests service at each time slot.
According to the request service, the edge servers will allocate suitable resources to the user.
The allocated resources may change the user's features and then the user experience.
In this work, we assume that the user experience with respect to the allocated resources is labeled as negative or positive\footnote{
In fact, this binary classification model has widely existed in practical applications. For instance, some MEC operators prefer to conduct a binary service satisfaction questionnaire to the users\cite{DD2017}.}.
If the user is not satisfied with this service, its QoE is labeled as a complaint denoted by `1' (positive); otherwise, its QoE is marked as `0' (negative).
\com{Note that we consider the real-time feedback setting here, where the server can receive the user experience immediately after resource allocation. However, for the asynchronous feedback, one can design an implicit feedback mechanism to produce the instant feedbacks, which also collects in real-time but without explicit user experience input. Then, the implicit-explicit hybrid feedback can be used to guide resource allocation.}

Mathematically, assume that each user has a $D$-dimensional feature vector denoted by ${{\mathbf{x}}}(t) = [ {x_{1}(t)},{x_{2}(t)},\ldots,$ ${x_{D}(t)} ]^\mathsf{T}$, where ${x_{d}(t)} $ is  the $d$-th feature value of the $t$-th user.
Note that each feature can be regarded as a specific performance metric, such as throughput, delay, or energy efficiency.
In addition, the edge server has $K$ types of available resources, such as bandwidth, computation, and storage capability.
The bandwidth resource may influence the throughput and delay performance, resulting in changed user feature values.
We adopt a linear model\footnote{\com{The linear model \cite{DD2017} is considered as our first step in the data-driven resource allocation problem. This simple model captures the features we consider here, such as throughput. We will explore more complex and general functions in future.}} to quantify the relationship between the feature value and the allocated resources.
Then, the user's feature vector $\mathbf{x}(t)$ after resource allocation is updated by
\begin{equation}\label{UpFeaVal}
{\mathbf{g}}(t)={\mathbf{x}}(t) + { \mathbf{Z}}(t){{\mathbf{r}}}(t),
\end{equation}
where ${\mathbf{g}(t)} \in \mathbb R^{D \times 1}$ is the updated feature vector.
Matrix ${\mathbf Z}(t) \in \mathbb R^{D \times K}$ represents the resource coefficient of the $t$-th user,
and element $z_{d,k}(t)$ denotes the effect of the $k$-th resource on the $d$-th user feature.
For convenience, the main notations used in this paper are summarized in Table \ref{tab_sym}.

In the following, we consider two types of network scenarios, i.e., the users are homogeneous or heterogeneous.
For the case of homogeneous users, all users have the same resource coefficient matrix ${\mathbf Z}$.
For the case of heterogeneous users, different users have different resource coefficient matrix ${\mathbf Z(t)}$.
In addition, we consider that only one user requests a service from the edge server at each time slot.
The system's goal is to maximize the long-term average QoE or reduce the total complaint rate by sequentially allocating limited resources to these homogeneous or heterogeneous users.

\begin{table}[!t]
\caption{List of Main Notations}
\centering
\begin{tabular}{l|l}
\hline
Notation & Description \\ [0.5ex]
\hline\hline                 
$D$   &  Total number of user features \\  
$K$   & Total types of the available resources \\
 $I$   & Size of the initial dataset   \\
 $T$   & Size of the online dataset    \\
 $B_k$   & Upper bound of resource $k$   \\
 $x_d(t)$   & The value of $d$-th user feature at slot $t$    \\
 $r_k(t)$   & The value of $k$-th allocated resource at slot $t$    \\
 $z_{d,k}(t)$   & The effect of $k$-th resource on $d$-th user \\
                         & feature at time slot $t$    \\
 $L_{d,k} (t)$  & The number of unit resources in the $(d, k)$-th \\
                          & arm at time slot $t$    \\
 $Q_k(t)$   & The virtual queue of resource $k$ at time slot $t$    \\
 $v_k, \tau_k$  & The Lagrangian multiples of the $k$-th resource    \\
 $\mathbf{w}$   & The weight vector of the classification model    \\
 $\hat {\mathbf \Psi}$   & The resource coefficient matrix    \\
 \hline                        
 \end{tabular} \label{tab_sym}
 \end{table}

\section{Problem Statement} \label{III}
We first introduce the CORA framework in Fig. \ref{first_model} to handle the  QoE maximum problem.
This framework mainly consists of a classification component,  a resource allocation component, and an evaluation component.
As shown in Fig. \ref{first_model}, these components are operated in a closed loop.
The classification component is used to learn the causal relationship between the user feature value and the user experience by using the logistic regression-based classifier.
The resource allocation component adopts the learned model to assign resources to new users.
After allocation, the user evaluates the allocated resources, where the testing result is regarded as the ground truth (i.e., the binary service satisfaction questionnaire) to fed back the classification component for model learning.
In the following, we illustrate the CORA framework in detail and formulate the resource allocation problem as a stochastic optimization problem.
 \begin{figure}[!t]
	\centering
	\includegraphics[width=0.85\columnwidth]{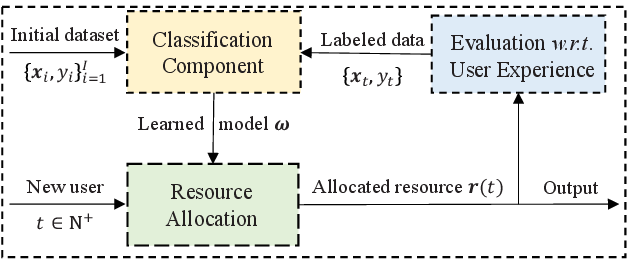}
	\caption{The closed-loop online resource allocation (CORA) framework.}
	\label{first_model}
\end{figure}

\subsection{ Logistic Regression-Based Classification Component}
We adopt the logistic regression model\footnote{According to \cite{DD2017}, the logistic regression model has the best performance among existing classifiers for predicting the causal relationship.} to the classification component to learn the objective model based on the historical and online dataset.
This dataset contains the ground truth information (or relationship) of the user experience and the allocated resources,
which is labeled as negative or positive.
Let $\Omega_0 =  \{\mathbf x_i, y_i\}^{I}_{i=1}$ be the  initial dataset,  where $I$ denotes the total number of existing users.
Note that $I$ can be any nonnegative integer value in the homogenous user case and a fine-tuning parameter in the heterogeneous user case.
In addition,  $\mathbf x_i \in \mathbb R^{D\times 1}$ and $y_i \in\{0,1\}$ are the feature and label of user $i$, respectively.
Let $T$ be the total number of time slots.
Hence, the overall data size is denoted by $\Omega_{T}= \{\mathbf x_i, y_i\}^{I}_{i=1}\cup \{\sum_{s=1}^{S_t} (\mathbf g_s(t), y_s(t))\}^{T}_{t=1}$, where $\mathbf g_s(t)$ and $y_s(t)$ are the updated feature vector and label of  user $s$ at the time slot $t$, respectively.

Unlike conventional classifiers, the goal of the classification component in Fig. \ref{first_model} is not to maximize the accuracy of the predicting user labels solely,
but to better quantify the relationship between user experience and resource allocation outcomes.
Mathematically, the objective of the classification model is to minimize the cross-entropy cost function, i.e.,
\begin{equation}\small
\begin{aligned}
&{\mathscr{P}_{\textrm{0}}}:\
\mathop {\max }\limits_{w_0,\mathbf w} \mathop {\lim}\limits_{T\to \infty }
J_T(w_{0}, \mathbf w)\\
& = \frac{1}{N_T}  \left\{\sum_{t= 1}^{T}\sum_{s= 1}^{S_t} \Bigg({y}_s(t)\mathop{\log}\bigg(\frac{1}{1 + \mathbb{\exp}( - {w_{0}} - \mathbf w^\mathsf{T}{{\mathbf{g}_s}(t)})}\bigg)\Bigg.\right.\\
&\Bigg.\left.+\left(1-y_s(t)\right)\mathop{\log}\bigg(\frac{1}{1 + \mathbb{\exp}( {w_{0}} + \mathbf w^\mathsf{T}{{\mathbf{g}_s}(t)})}\bigg) \Bigg)+\mathbf \Gamma( {w_{0}}, \mathbf w)\right\},\label{logiticregression}
\end{aligned}
\end{equation}
where $\mathbf \Gamma( {w_{0}}, \mathbf w)$ is the loss function of the classifier using the initial dataset $\Omega_0$, i.e.,
\begin{equation}
\begin{aligned}
\mathbf \Gamma( {w_{0}}, \mathbf w)
&=\sum\limits_{i = 1}^{I} \Bigg\{ {y}_i\mathop{\log}\left(\frac{1}{1 + \mathbb{\exp}( - w_0 - \mathbf w^\mathsf{T}{{\mathbf{x}}_i})}\right) \\
&+(1-y_i)\mathop{\log}\left(\frac{1}{1 + \mathbb{\exp}( w_0+\mathbf w^\mathsf{T}{{\mathbf{x}}_i})}\right)\Bigg\},
\end{aligned}
\end{equation}
where  $N_T=I+\sum_{t=1}^{T}S_t$ is the total number of users in  dataset $\Omega_T$.
Note that there may be multiple users requesting service at each time slot, i.e., $S_t \geq 1$.
We can group the users who requested service at the same slot as a super user because the coefficient matrix ${\mathbf Z}$ is fixed in the homogenous user case.

The classifier intends to learn the parameters $w_0$ and $\mathbf w$, where $w_0$ represents the intercept and $\mathbf w=\left[w_1,w_2,\dots,w_D\right]^\mathsf{T}$ contains the weights related to features.
It can be seen that  ${\mathscr{P}_{\textrm{0}}}$ is a convex function,
which can be solved effectively using the average gradient descent (AGD)  algorithm \cite{boyd2004convex}.
Then, $\tilde{{\mathbf w }}_{t}=({w_{t,0}},{\mathbf{w}}_{t})$ is updated by
\begin{equation}
\tilde{{\mathbf w }}_{t} = \tilde{{\mathbf w }}_{t-1} -\frac{\eta_{t}}{N_{t}}\nabla J_t(\tilde{{\mathbf w }}_{t-1}),\label{weightsupdate}
\end{equation}
where $\eta_{t}$ is the step size of the $t$-th time slot.
Term  $N_{t}=I+t$ is the number of iterations and $\nabla J_t(\tilde{{\mathbf w }}_{t-1})$ denotes the gradient of $J_t(\tilde{{\mathbf w }}_{t-1})$.

\subsection{Data-Driven Resource Allocation Component}
For the ease of notation, we define $h(x)  \triangleq ({1 +\exp(x)})^{-1}$.
Thus,  the positive probability of the $s$-th user at  time slot $t$ is computed by $h\left(-w_{{t-1},0}-\mathbf w^\mathsf{T}_{t-1}\mathbf x_s\left(t\right)\right)$.
After resource allocation, the positive probability is predicted by
\begin{equation}\label{pos_rate}
u[{{ \mathbf{r}_s}}(t)]=h\left(C_s(t)+{\mathbf a}_s(t) \mathbf{r}_s(t)\right),
\end{equation}
where  $C_s(t)=-{{\mathbf{w}^\mathsf{T}_{t-1}}}{\mathbf{x}_s}(t) -{w_{t-1,0}}$ and $\mathbf{r}_s(t)$ is the allocated resources at the $(s,t)$-th user.
In addition, ${\mathbf a}_s(t)=-{\mathbf Z}_s(t){^\mathsf{T}}{\mathbf{w}}_{t-1}$
and  ${\mathbf{a}_s(t)} = \left[ {{a^s_1}(t),{a^s_2}(t),...,{a^s_K}(t)} \right]^\mathsf{T}$ is the resource efficiency vector, where $a^s_k(t)$ represents the efficiency of the resource $k$ on user $t$.
When $a^s_k(t) \le 0$, the allocation of resource $k$ impairs the performance of system.
Hence, we assume that $a^s_k(t) > 0$.

The objective of the resource allocation component is to minimize the average complaint rate, which can be formulated as the following stochastic optimization form, i.e.,
\begin{align}
{\mathscr{P}_{\textrm{1}}}:
\mathop {\textrm{minimize}}\limits_{\mathbf r_s(1),\dots,\mathbf r_s(T)} \quad &\mathop {\lim}\limits_{T \to \infty }\frac{1}{T}\sum_{t = 1}^{T}\sum_{s = 1}^{S_t}  u[{{\mathbf r}_s}(t)]\label{p1}\\
\textrm{subject\ to}{\quad}&\mathop {\lim}\limits_{T \to \infty } \frac{1}{T}\sum_{t = 1}^{T}\sum_{s = 1}^{S_t}   r_{s,k}(t) \leq {\bar r}_k, \forall k,\tag{\ref{p1}{a}} \label{consQ_stable}\\
&0 \leq \sum_{s = 1}^{S_t}   r_{s,k}(t)  \leq {B_k},\forall k = 1,2, \ldots ,K\tag{\ref{p1}{b}}\label{consB},
\end{align}
where ${\bar r}_k$ and $B_k$ are the upper bound of average resources $k$ over time horizon $T$ and the upper bound of resources $k$, respectively.
Note that the long-term average constraint in \eqref{consQ_stable} is primarily due to the nature of resource allocation and the consideration of workload variability, fairness, and system stability.
Moreover, we consider the situations in which certain applications may require per-slot constraints to ensure strict real-time requirements.
Therefore, we also include the per-slot constraint in constraint \eqref{consB} to capture these situations.

Note that $\mathscr{P}_{\textrm{1}}$ is computationally intractable due to the non-convex objective function.
Moreover,  $\mathscr{P}_{\textrm{1}}$  contains a long-term average constraint \eqref{consQ_stable} over successive time slots.
Solving it directly requires global knowledge over time horizon $T$ about the feature values of all new users, which deviates from the purpose of online resource allocation.
To overcome these challenges, we propose to solve this problem by devising a data-driven online queue resource allocation method for homogenous and heterogeneous users.

\section{Optimal Online Queue Resource Allocation for Homogenous Users}\label{IV}
In this section, we first apply the Lyapunov optimization to transform problem ${\mathscr{P}_{\textrm{1}}}$ into a one-slot optimal problem.
Then, we propose a data-driven OOQRA algorithm to solve problem \eqref{p1}.
We give a convergence analysis for the proposed algorithm and discuss the effectiveness of the classifier.

\subsection{Lyapunov Optimization}\label{transform}
Lyapunov optimization is a drift-plus-penalty method, which has been widely adopted in the literature to decouple the long-term average constraint as per-time constraint \cite{bracciale2020lyapunov}.
Let $\tilde{\mathbf{x}}(t)=\sum_{s=1}^{S_t} \mathbf{x}_s(t) / S_t$, $\tilde{\mathbf{a}}(t)=\sum_{s=1}^{S_t} \mathbf{a}_s(t) / S_t$,  and $\tilde{\mathbf{r}}(t)=\sum_{s=1}^{S_t} \mathbf{r}_s(t) / S_t$ be the super user's feature vector, resource efficiency vector, and allocated resources at time slot $t$, respectively.
A virtual queue vector ${\mathbf Q}(t)=\left[{Q_1}(t),\ldots,{Q_K}(t)\right]^\mathsf{T}$ is introduced to control the drift of the allocated resources.
The  evolution of ${Q_k}(t+1)$, i.e., the virtual queue of the $k$-th resource of the $(t+1)$-th super user, is given by
\begin{equation}
{Q_k}(t+1) = \max \left\{ {{Q_k}(t) +{\tilde{r}_{k}}(t) - \bar{r}_k},0 \right\}, \label{updateQ}
\end{equation}
where $Q_k(0)=0, \forall k$.
Let ${{V}}(t) = \sum_{k = 1}^{K}{{{Q}}_k^2}(t) /2 $ be the Lyapunov function\footnote{A larger ${{V}}(t)$ implies a  seriously backlogged resource queue.
In other words,  a new user should wait a long time for service. },  which quantitatively characterizes the cloud server's queue states.
The one-slot conditional Lyapunov drift \cite{stochasticBook} is given by
\begin{equation}\label{cond_Lya}
\Delta {{V}}(t) \triangleq  \mathbb{E}{\rm{}}\left[ {V}(t + 1) - {V}(t) |{\mathbf{Q}} \right].
\end{equation}
Notice that the smaller $\Delta {{V}}(t)$, the more stable the system is.

Then, the drift-plus-penalty expression of the objective of the online resource allocation problem is given by
\begin{equation}
\mathbb{E}\left[\left(\Delta { V}(t)+ \theta h\left( C(t) + {\tilde{\mathbf{a}}}^\mathsf{T}(t){\tilde{\mathbf{r}}}(t) \right) \right)|\mathbf Q\right],
\end{equation}
where $\theta$ is a non-negative control parameter, striking a balance between the predicted positive probability and the consumption of cloud resources.
By analyzing the drift-plus-penalty expression, we have the following upper bound.
\begin{lemma}\label{le1}
For any control parameter $\theta \geq 0$, the drift-plus-penalty expression of  problem $\mathscr{P}_{\textrm{1}}$ is upper bounded by
\begin{equation}
\begin{aligned}
&\mathbb{E}\left[\left(\Delta { V}(t)+ \theta  h\left( C(t) + {\tilde{\mathbf{a}}}^\mathsf{T}(t){\tilde{\mathbf{r}}}(t) \right) \right)|\mathbf Q\right]\le D+\\
&\sum\limits_{k = 1}^{K}Q_k(t)\mathbb{E}\left[ \tilde{r}_{k}(t)|\mathbf Q\right]
+ \theta  \mathbb{E}\left[h\left( C(t) + {\tilde{\mathbf{a}}}^\mathsf{T}(t){\tilde{\mathbf{r}}}(t) \right)|\mathbf Q \right], \label{ddpupper}
\end{aligned}
\end{equation}
where $D = O(r^2_{\max})$ is a positive constant and $r_{\max}$ is the maximum resource allocated to the super user at time slot $t$.
\end{lemma}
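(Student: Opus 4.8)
The plan is to apply the standard drift-plus-penalty bounding technique of Lyapunov optimization. First I would combine the virtual-queue update \eqref{updateQ} with the elementary inequality $(\max\{a,0\})^2 \le a^2$, applied with $a = Q_k(t) + \tilde r_k(t) - \bar r_k$, to get $Q_k^2(t+1) \le \big(Q_k(t) + \tilde r_k(t) - \bar r_k\big)^2$. Expanding the square and subtracting $Q_k^2(t)$ then gives
\[
Q_k^2(t+1) - Q_k^2(t) \;\le\; \big(\tilde r_k(t) - \bar r_k\big)^2 + 2\,Q_k(t)\big(\tilde r_k(t) - \bar r_k\big).
\]

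Next I would sum over $k=1,\dots,K$, divide by two, and use $V(t) = \tfrac12\sum_k Q_k^2(t)$ to obtain
\[
V(t+1) - V(t) \;\le\; \tfrac12\sum_{k=1}^K\big(\tilde r_k(t) - \bar r_k\big)^2 \;+\; \sum_{k=1}^K Q_k(t)\big(\tilde r_k(t) - \bar r_k\big).
\]
The first sum is deterministically bounded: the per-slot cap \eqref{consB} forces $0 \le \tilde r_k(t) \le r_{\max}$ while each $\bar r_k$ is a fixed finite constant, so $\tfrac12\sum_k(\tilde r_k(t) - \bar r_k)^2 \le D$ for a constant $D = O(r_{\max}^2)$. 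For the second sum, non-negativity of the queues and of $\bar r_k$ yields $-\sum_k Q_k(t)\bar r_k \le 0$, hence $\sum_k Q_k(t)(\tilde r_k(t) - \bar r_k) \le \sum_k Q_k(t)\tilde r_k(t)$.

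Finally I would add the penalty $\theta\, h\!\big(C(t) + \tilde{\mathbf a}^\mathsf{T}(t)\tilde{\mathbf r}(t)\big)$ to both sides, take the conditional expectation $\mathbb{E}[\,\cdot\,|\mathbf Q]$, and use that $V(t)$ and each $Q_k(t)$ are $\mathbf Q$-measurable and therefore pass outside the expectation; this produces exactly the bound \eqref{ddpupper}. The argument is essentially routine, and the only points that need care are (i) certifying that the quadratic drift term admits a uniform constant bound $D = O(r_{\max}^2)$, which is precisely where constraint \eqref{consB} is invoked, and (ii) the sign bookkeeping that discards the $-\bar r_k$ terms via non-negativity of $Q_k(t)$ and $\bar r_k$. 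I do not expect any genuine obstacle beyond these.
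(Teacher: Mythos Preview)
Your proposal is correct and follows essentially the same route as the paper's proof: both apply $(\max\{a,0\})^2 \le a^2$ to the queue update, expand the square, bound the quadratic term $\tfrac12\sum_k(\tilde r_k(t)-\bar r_k)^2$ by a constant $D = O(r_{\max}^2)$ via the per-slot cap, drop the $-\sum_k Q_k(t)\bar r_k$ term using non-negativity, and then add the penalty and take conditional expectation. The only cosmetic difference is that you do the algebra pointwise before taking $\mathbb{E}[\,\cdot\,|\mathbf Q]$, whereas the paper carries the conditional expectation from the start; the content is identical.
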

\begin{proof}
See Appendix \ref{Appdx1}.
\end{proof}

Note that the upper bound can be further simplified since $D$ is a constant.
In addition, the conditional expectation can also be removed because of the principle of opportunistically minimizing an expectation in \cite{stochasticBook}.
Moreover, we omit constraint (\ref{consQ_stable}) since it can be automatically satisfied when all queues are stable by following the Lemma \ref{le2}. \begin{lemma}\label{le2}
If all virtual queues $Q_k(t)$ satisfy the stability condition, i.e., $\bar{ Q}_k =\mathop {\lim}\limits_{T \to \infty }\frac{1}{T}\sum_{t= 0}^{T} \mathbb{E}\left[Q_k(t)\right] < \infty$, the resource allocation policy will automatically guarantee the constraint (\ref{consQ_stable}).
\end{lemma}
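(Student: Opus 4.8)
The plan is to read the queue recursion \eqref{updateQ} as a bookkeeping identity for the accumulated violation of \eqref{consQ_stable}, telescope it over the horizon, and then use the finiteness of $\bar Q_k$ to eliminate the leftover boundary term. First I would drop the $\max$ in \eqref{updateQ} to obtain, for every resource $k$ and every slot $t \ge 0$, the one-sided bound $Q_k(t+1) \ge Q_k(t) + \tilde r_k(t) - \bar r_k$. Summing this over $t = 0,1,\dots,T-1$, the $Q_k$ terms telescope and, using $Q_k(0) = 0$, we get $Q_k(T) \ge \sum_{t=0}^{T-1}\tilde r_k(t) - T\bar r_k$, i.e. $\frac{1}{T}\sum_{t=0}^{T-1}\tilde r_k(t) \le \bar r_k + \frac{Q_k(T)}{T}$. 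Taking expectations and recalling that in the homogeneous setting the slot-$t$ requests are grouped into one super user, so $\tilde r_k(t)$ is exactly the per-slot resource-$k$ consumption entering \eqref{consQ_stable}, the lemma reduces to showing $\lim_{T\to\infty}\mathbb{E}[Q_k(T)]/T = 0$ — that is, that the strong (time-average) stability hypothesis forces mean-rate stability of the virtual queue.

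For that reduction I would use that the per-slot increment of $Q_k$ is uniformly bounded: from \eqref{consB} we have $0 \le \tilde r_k(t) \le B_k$, and since $Q_k(t) \ge 0$ this yields $-\bar r_k \le Q_k(t+1) - Q_k(t) \le B_k$, hence $|Q_k(t+1) - Q_k(t)| \le \max\{B_k,\bar r_k\} =: c_k$. I would then argue by contradiction: if $\limsup_{T\to\infty}\mathbb{E}[Q_k(T)]/T = \epsilon > 0$, choose a subsequence $T_j$ along which $\mathbb{E}[Q_k(T_j)] \ge \epsilon T_j / 2$; the Lipschitz bound $c_k$ propagates this to $\mathbb{E}[Q_k(t)] \ge \epsilon T_j / 4$ for all $t$ within an interval of length $\Theta(\epsilon T_j / c_k)$ around $T_j$, so the Cesàro average $\frac{1}{T}\sum_{t=0}^{T}\mathbb{E}[Q_k(t)]$ evaluated just past $T_j$ is $\Omega(\epsilon^2 T_j / c_k)$ and therefore unbounded — contradicting $\bar Q_k < \infty$. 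Hence $\mathbb{E}[Q_k(T)]/T \to 0$, and feeding this back into the displayed inequality gives $\lim_{T\to\infty}\frac{1}{T}\sum_{t=1}^{T}\sum_{s=1}^{S_t}\mathbb{E}[r_{s,k}(t)] \le \bar r_k$ for every $k$, which is precisely \eqref{consQ_stable}. Equivalently, this chain — strong stability plus bounded increments $\Rightarrow$ mean-rate stability $\Rightarrow$ satisfaction of the time-average constraint — is a standard result in Lyapunov network optimization and can simply be cited from \cite{stochasticBook}.

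The telescoping step and the expectation are routine. The one place that needs care, and which I expect to be the main obstacle, is the passage from ``$\bar Q_k$ finite'' (finiteness of a Cesàro mean) to ``$\mathbb{E}[Q_k(T)] = o(T)$'': this implication is false for arbitrary nonnegative sequences and genuinely relies on the bounded per-slot change of $Q_k$, so the argument must explicitly carry the constant $c_k$ coming from the per-slot resource cap \eqref{consB}. If instead one takes mean-rate stability ($\mathbb{E}[Q_k(T)]/T \to 0$) as the working definition of ``stable,'' the lemma follows immediately from the telescoping inequality alone, and the contradiction argument above is exactly what bridges the two notions of stability.
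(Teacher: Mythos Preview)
Your proposal is correct and follows essentially the same telescoping argument as the paper's proof: drop the $\max$ in \eqref{updateQ}, sum over $t$, divide by $T$, and use queue stability to kill the boundary term $Q_k(T)/T$. The paper's version is actually terser than yours --- it simply asserts that stability gives $\lim_{T\to\infty}(Q_k(T)-Q_k(0))/T=0$ without spelling out the strong-stability-$\Rightarrow$-mean-rate-stability step you carefully justify via bounded increments, so your write-up is if anything more complete on the one point you flagged as the main obstacle.
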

\begin{proof}
See Appendix \ref{Appdx2}.
\end{proof}

Therefore, according to Lemmas \ref{le1} and \ref{le2}, problem ${\mathscr{P}_{\textrm{1}}}$ can be transformed into
\begin{align}
{\mathscr{P}_{\textrm{2}}}:
&\mathop {\textrm{minimize} }\limits_{\mathbf{\tilde{r}}(t)} \quad \sum\limits_{k = 0}^{K}{Q_k}(t) \tilde{r}_{k}(t) + \theta  h\left( C(t) + {\tilde{\mathbf{a}}}^\mathsf{T}(t){\tilde{\mathbf{r}}}(t) \right)\label{p2}\notag\\
&\mathop {\textrm{subject to } }\quad 0 \le {\tilde{r}_{k}}(t) \le {B_k},\forall k.
\end{align}
Note that, though $\mathscr{P}_{\textrm{2}}$ facilitates the online resource allocation design, it is still challenging to solve due to its non-convex objective function.

\subsection{Online Queue Resource Allocation (OQRA) Algorithm}\label{basisAlg}
In the following, we propose an OQRA algorithm to solve this online resource allocation problem.
The optimal resource allocation policy is required to minimize the time-average positive rate while ensuring the stability of the resource queue.
According to \cite{boyd2004convex}, the optimal solution of $\mathscr{P}_{\textrm{2}}$ needs to satisfy the KKT conditions.
In the following, we first derive the  KKT conditions of $\mathscr{P}_{\textrm{2}}$.
Then, the optimal policy can be obtained by comparing the candidate solutions, which are limited by the properties of optimal solutions from the KKT conditions.

The Lagrangian function of ${\mathscr{P}_{\textrm{2}}}$ is given by
\begin{equation}
\begin{aligned}
L\left({\tilde{\mathbf r}}(t),{\mathbf B},{\mathbf Q}(t),\boldsymbol{\tau},{\mathbf v}\right)= &{\mathbf Q}^{\mathsf{T}}(t){\tilde{\mathbf r}}(t) - {\boldsymbol{\tau} }^\mathsf{T}[{\mathbf B}-{\tilde{\mathbf r}}(t)] + {\mathbf v}^\mathsf{T}{\tilde{\mathbf r}}(t) \\ &
+ \theta  h\left( C(t) + {\tilde{\mathbf{a}}}(t){\tilde{\mathbf{r}}}^\mathsf{T}(t) \right)\label{Lagran},
\end{aligned}
\end{equation}
where ${\mathbf B} = \left[ {{B_1},{B_2},...,{B_K}} \right]^\mathsf{T}$.
The KKT conditions that the optimal solution of $t$-th super user shall satisfy are given by
\begin{align}
\theta \tilde{a}_k(t)h'\left( C(t) + {\tilde{\mathbf{a}}}^\mathsf{T}(t){\tilde{\mathbf{r}}}(t) \right) + Q_k(t) + {\tau}_k - v_k = 0&\label{kkt_dera},\\
v_k \tilde{r}_{k}(t) = 0&\label{kkt_vr},\\
{\tau}_k[B_k - \tilde{r}_{k}(t)] = 0&\label{kkt_taur},\\
v_k \ge 0&\label{kkt_v},\\
{\tau}_k \ge 0&\label{kkt_tau},\\
\forall k = 1, \dots, K&,\label{kkt_K}\notag
\end{align}
where $h'(x)$ denotes the gradient of $h(x)$.
Note that $\tilde{a}_k(t)/Q_k(t)$ turns out to be an appropriate indicator of the unit resource efficiency.
Thus, we arrange the $K$ types of resources in a non-increasing order of $p_k(t) = \tilde{a}_k(t)/Q_k(t)$, i.e.,
$p_1(t)\geq \ldots \geq p_k(t) \geq \ldots \geq p_K(t)$.
By further analyzing the KKT conditions, we have the following properties.
\newtheorem{property}{Property}
\begin{property}\label{pro1}
The $K$ types of resources are allocated sequentially, i.e., $\tilde{r}_{k{'}}(t) = B_{k{'}}$, where $\forall k{'} < k$ is a necessary condition for $\tilde{r}_{k}(t) > 0$.
\end{property}
\begin{proof} See Appendix \ref{Appdx3}. \end{proof}
\begin{property}\label{pro2}
If $\theta h'\left(C(t) + \sum_{k' = 1}^{k-1}\tilde{a}_{k{'}}(t)\tilde{r}_{k{'}}(t)\right)< -(p_k(t))^{-1}$ and  $B_k > 0$, the $k$-th resource will be allocated, i.e, $\tilde{r}_k(t)>0$.
Moreover, if $0 < Q_k(t) \le {\theta \tilde{a}_{k}(t)}/{4}$ and  then $\tilde{r}_{k}(t)$ satisfies
\begin{equation}
\tilde{r}_{k}(t) = \min \left \{\frac{\gamma(t)- C(t)-\sum_{k' = 0}^{k-1}\tilde{a}_{k{'}}(t)B_{k{'}}}{\tilde{a}_{k}(t)}, B_k \right\},\label{rk_sol1}
\end{equation}
where $\gamma(t) =\mathop{ \ln}(-1+\frac{\theta \tilde{a}_k(t)+\sqrt{\theta^{2}({\tilde{a}_k(t)})^{2}-4\theta \tilde{a}_k(t)Q_k(t)}}{2Q_k(t)})$; if $Q_k(t) = 0$, we have $\tilde{r}_{k}(t) = B_k$.
\end{property}
\begin{proof} See Appendix \ref{Appdx4}. \end{proof}

\noindent\textbf{Remark 1:} Property \ref{pro1} implies that there exists a unique order for the optimal resource allocation.
Besides, Property \ref{pro2} provides the hidden constraints that the amount of each type of resource should be allocated.

\noindent\textbf{Remark 2:} Based on Properties \ref{pro1} and \ref{pro2}, there are up to $K+1$ solutions $[{\mathbf r}^{0}(t),{\mathbf r}^{k}(t),\dots,{\mathbf r}^{K}(t)]$, where ${\mathbf r}^{k}(t)$ represents the solution in which resource $1$ to $k$ are allocated.

\begin{algorithm}[!t]
	\caption{The Online Queue Resource Allocation (OQRA) Algorithm}
	\label{OQRAA}
	\LinesNumbered
	\KwIn{${\mathbf{x}}(t)$, $\mathbf{Z}$, $\mathbf{w}_{t-1}$, ${w_{t-1,0}}$, $\theta$}
	Compute $C(t)$, and $\tilde{\mathbf a}(t)$\;
	Arrange resources as $p_1(t)\ge p_2(t)\ge \ldots \ge p_K(t)$\;
	\For{$k=1,\ldots,K$}{
			\If{$\theta  h'\left(C(t) + \sum_{k' = 1}^{k-1}\tilde{a}_{k^{'}}(t)\tilde{r}_{k^{'}}(t)\right) < -{1}/{p_k(t)}$}{
				\If{$0 < Q_k(t) \le {\theta \tilde{a}_{k}(t)}/{4}$}{
					Update $\tilde{r}_{k}(t)$ as Eq. (\ref{rk_sol1}) }
				\If{$Q_k(t)=0$}{
					Update $\tilde{r}_{k}(t)$ as $B_k$
				}								
		    }
		\If {$t\le{T}-1$}{
			Update $Q_k(t+1)$ as Eq. (\ref{updateQ})
		}
	}
	\KwOut{$\tilde{\mathbf r}(t),\mathbf Q(t+1)$}
\end{algorithm}
However, the optimal resource allocation policy can be determined by selecting the best among these solutions by given the classifier model parameters.
A similar structure can be observed in \cite{DD2017}, where the Lagrangian multiplier is the resource price.
By contrast, we regard the resource price as the resource queue length, which can quantitatively characterize the delay of the resource allocation.
Moreover,  there is no parameter to balance the objective function and resource price in \cite{DD2017};
while we strike a balance on this by adjusting the parameter $\theta$.
Based on the above analysis, we present the OQRA algorithm to solve ${\mathscr{P}_{\textrm{2}}}$ as shown in  {\bf Algorithm \ref{OQRAA}}.

\subsection{Classifier Optimization}
In the CORA framework, the feedback (i.e., the new user's feature value and label) is the foundation of the classier optimization.
It not only helps us to validate the classifier model by adding a new user's feature value and label but also better quantifies the causal relationship between the feature value and user experience.
Denote  $\phi_t$ by the classifier model at time slot $t$.
The system goal is to find the optimal classifier to better guide the resource allocation, i.e.,
\begin{equation}
{\mathscr{P}_{\textrm{3}}}:
\phi^* = \mathop{\arg\min}\limits_{\phi_t \in \varphi}\mathop {\lim}\limits_{T \to \infty }\frac{1}{T}\sum\limits_{t = 1}^{T} \mathbf G\left(\mathbf x(t)+\mathbf Z(t)\mathbf r^*(\phi_t)\right),
\end{equation}\label{goal}
where $\tilde{\mathbf r}^*(\phi_t)$ is the optimal solution of ${\mathscr{P}_{\textrm{1}}}$ and  $\varphi=\{\phi_t\}^{T}_{t=1}$ is the set of classifiers.
Symbol $\mathbf G(\mathbf x)$ denotes the ground truth, representing the positive probability of the feature vector $\mathbf x$.

In fact, the optimal classifier is obtained by solving the problem ${\mathscr{P}_{\textrm{3}}}$ with known ground truth.
However, this ground truth is often difficult to obtain.
As a result, one may approximate it to true value as closely as possible.
The CORA system gradually improves the classifier by dynamically exploring the relationship between the feature values and user experience.
From a long-term perspective, we temporarily adopt a non-optimal classifier to guide the resource allocation.
Then,  the time-average positive probability can be minimized when the classifier converges or reaches the optimal point.
The process of the classifier optimization is shown in {\bf Algorithm \ref {Data-Driven Online}} (Lines 5 to 6).
In {\bf Algorithm \ref{Data-Driven Online}}, we give the details of the data-driven OOQRA algorithm.
 \begin{algorithm}[!t]
	\caption{Optimal Online Queue Resource Allocation (OOQRA) Algorithm}
	\label{Data-Driven Online}
	\LinesNumbered
	\KwIn{${\mathbf{x}_s}(t)$, $\mathbf{Z}$, $\theta$}
	    Construct the classifier model by training the initial overall dataset to obtain $w_{0,0}, \mathbf w_0$ and set ${\mathbf Q}(0)=0$\;
		\For{$t= 1,2, \ldots ,T$}{	
		    Predict the complaint rate of $t$-th super user as $u[\tilde{\mathbf{r}}(t)]$\;\label{alg_r1}
              Take this complaint rate into {\bf Alg. \ref{OQRAA}} and compute the optimal resource allocation policy $\tilde{\mathbf r}(t)$\;\label{alg_r2}
	 	    Update the overall dataset as
            \begin{equation*}
             N_t = N_{t-1}\cup\left\{\mathbf x_s(t)+\mathbf Z\mathbf r_s(t),y_s(t)\right\}_{s=1}^{S_t};
            \end{equation*}

		    Update the parameters of classifier as Eq. (\ref{weightsupdate})\; \label{alg_w2}
		}
	\KwOut{$\left[\tilde{\mathbf r}(1), \tilde{\mathbf r}(2), \dots,\tilde{\mathbf r}(T)\right]$}
\end{algorithm}

\subsection{Convergence Analysis}
At last, we provide a convergence analysis for the data-driven OOQRA algorithm.
We first analyze the performances of the classifier and resource allocation components, separately, which yields the following lemmas.
\begin{lemma} \label{lemma3}
Given a decreasing learning rate $\eta_t < {2}/{L}$, the classifier model converges to a relatively small range with a limited time, i.e.,
\begin{equation}
\sum\limits_{t = 0}^{t_1}\mathbb{E}\left[||\nabla J(\tilde {\mathbf w}_{t})||^2\right]= \epsilon_{t_1},
\end{equation}
and
\begin{equation}
\sum\limits_{t = 0}^{t_1}\mathbb{E}\left[J(\tilde {\mathbf w}_{t})-J^*\right]=\epsilon_{t_1},
\end{equation}
where  L is the Lipschitz constant of $F({ \tilde{ \mathbf w}})$ (the target function of ${\mathscr{P}_{\textrm{3}}}$).
In addition, $t_1$ is the time slot that the classifier greedily converges to a positive scale $\epsilon_{t_1}$, where $\epsilon_{t_1}< 10 ^{-3}$.
\end{lemma}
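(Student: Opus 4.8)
\textbf{Setup.} The plan is to treat the classifier update \eqref{weightsupdate} as a stochastic (average) gradient-descent iteration on the convex objective associated with $\mathscr{P}_{\textrm{3}}$, and to run a standard descent-lemma argument under the $L$-smoothness hypothesis. First I would fix notation: let $F(\tilde{\mathbf w})$ denote the limiting target function (the population cross-entropy) with gradient $\nabla F$, let $J_t$ be the empirical objective over $\Omega_t$, and recall from the text that $\mathscr{P}_0$ is convex, so $F$ is convex with a minimizer $\tilde{\mathbf w}^*$ and optimal value $J^*$. The key structural facts I would invoke are: (i) $F$ is $L$-smooth, i.e. $\|\nabla F(\tilde{\mathbf w})-\nabla F(\tilde{\mathbf w}')\|\le L\|\tilde{\mathbf w}-\tilde{\mathbf w}'\|$, which is where the condition $\eta_t<2/L$ will be used; and (ii) the AGD step averages over $N_t=I+t$ samples, so the variance of the gradient estimate $\nabla J_t(\tilde{\mathbf w}_{t-1})$ around $\nabla F(\tilde{\mathbf w}_{t-1})$ decays like $O(1/N_t)$ — for a logistic model with bounded features this variance is uniformly bounded, call the bound $\sigma^2$.

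\textbf{Main steps.} The core is the descent inequality. Applying $L$-smoothness to consecutive iterates of \eqref{weightsupdate} and taking conditional expectations gives, for a step size $\eta_t<2/L$,
\begin{equation}
\mathbb{E}\big[F(\tilde{\mathbf w}_t)\big] \le \mathbb{E}\big[F(\tilde{\mathbf w}_{t-1})\big] - \eta_t\Big(1-\tfrac{L\eta_t}{2}\Big)\mathbb{E}\big[\|\nabla F(\tilde{\mathbf w}_{t-1})\|^2\big] + \tfrac{L\eta_t^2}{2}\cdot\tfrac{\sigma^2}{N_t},
\end{equation}
where the bias of $\nabla J_t$ relative to $\nabla F$ vanishes by the averaging (or is controlled by an $O(1/N_t)$ term absorbed into $\sigma^2$). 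Since $\eta_t<2/L$, the coefficient $c_t := \eta_t(1-L\eta_t/2)$ is strictly positive. Summing this telescoping bound from $t=1$ to $t_1$ yields
\begin{equation}
\sum_{t=0}^{t_1} c_{t+1}\,\mathbb{E}\big[\|\nabla F(\tilde{\mathbf w}_t)\|^2\big] \le F(\tilde{\mathbf w}_0) - \mathbb{E}\big[F(\tilde{\mathbf w}_{t_1})\big] + \tfrac{L}{2}\sum_{t=1}^{t_1}\tfrac{\eta_t^2\sigma^2}{N_t}.
\end{equation}
Because $F$ is bounded below by $J^*$ and the harmonic-type tail $\sum \eta_t^2/N_t$ is summable for a decreasing learning rate (e.g. $\eta_t\sim 1/\sqrt t$ or constant-then-decay), the right-hand side is finite; defining $\epsilon_{t_1}$ to be exactly this finite quantity (suitably normalized by $\min_t c_t$) gives the first claimed identity $\sum_{t=0}^{t_1}\mathbb{E}[\|\nabla J(\tilde{\mathbf w}_t)\|^2]=\epsilon_{t_1}$. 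For the second identity, I would use convexity of $F$: $F(\tilde{\mathbf w}_t)-J^*\le \langle\nabla F(\tilde{\mathbf w}_t),\tilde{\mathbf w}_t-\tilde{\mathbf w}^*\rangle \le \|\nabla F(\tilde{\mathbf w}_t)\|\cdot R$, where $R$ bounds the iterate distance to the optimum (which itself follows from the descent lemma, since the iterates stay in a sublevel set of $F$). Summing and applying Cauchy–Schwarz against the first bound converts $\sum\mathbb{E}[\|\nabla J\|^2]=\epsilon_{t_1}$ into $\sum\mathbb{E}[J(\tilde{\mathbf w}_t)-J^*]=\epsilon_{t_1}$ (after redefining the constant), and by choosing $t_1$ large enough the accumulated quantity is driven below the stated threshold $10^{-3}$.

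\textbf{Main obstacle.} The delicate point is not the descent telescoping — that is routine — but justifying that the averaged gradient $\nabla J_t(\tilde{\mathbf w}_{t-1})$ is an (asymptotically) unbiased, low-variance surrogate for $\nabla F(\tilde{\mathbf w}_{t-1})$ \emph{despite the closed-loop coupling}: the dataset $\Omega_t$ is built from feature vectors $\mathbf g_s(t)=\mathbf x_s(t)+\mathbf Z\mathbf r_s(t)$ that themselves depend on the current classifier $\tilde{\mathbf w}_{t-1}$ through the resource-allocation map of Algorithm \ref{OQRAA}. This breaks the usual i.i.d. assumption underlying AGD analysis, so one must argue (via boundedness of features and resources, Lipschitzness of the allocation map, and a filtration/martingale-difference decomposition) that the induced dependence contributes only another $O(1/N_t)$-type term. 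I would also need to be careful that "converges to a relatively small range with a limited time" is interpreted correctly: the lemma does not claim $\nabla F\to 0$ but only that the \emph{cumulative} squared-gradient (and optimality gap) up to the stopping time $t_1$ equals a small constant $\epsilon_{t_1}<10^{-3}$, so the argument is really an averaging/finite-horizon statement, and the choice of $t_1$ as the first slot at which the running sum crosses the threshold must be shown to be finite — which follows from the summability of the error terms established above.
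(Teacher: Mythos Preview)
Your treatment of the gradient-norm sum is essentially the paper's argument: apply $L$-smoothness to consecutive iterates of \eqref{weightsupdate}, telescope $F(\tilde{\mathbf w}_{t+1})-F(\tilde{\mathbf w}_t)$, and control the noise term via a bounded-variance assumption and the summability of $\sum\eta_t^2/N_t$. The paper additionally derives the structural ingredients (convexity, $L$-smoothness, gradient boundedness) explicitly from the logistic form, using $h(x)(1-h(x))\le 1/4$ to bound the Hessian, whereas you take these as given; that is a cosmetic difference.

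Where you diverge is in the optimality-gap bound. You propose to obtain $\sum\mathbb{E}[J(\tilde{\mathbf w}_t)-J^*]$ from the gradient-norm bound via convexity, i.e.\ $F(\tilde{\mathbf w}_t)-J^*\le\|\nabla F(\tilde{\mathbf w}_t)\|\cdot R$ followed by Cauchy--Schwarz. The paper instead runs a second, independent telescoping argument on $\|\tilde{\mathbf w}_{t+1}-\tilde{\mathbf w}^*\|^2$: expand the square, use $L$-smoothness together with convexity to replace $\langle\nabla F(\tilde{\mathbf w}_t),\tilde{\mathbf w}_t-\tilde{\mathbf w}^*\rangle$ by $F(\tilde{\mathbf w}_t)-F^*$, and sum. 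Your route is more economical (one telescoping instead of two) but costs you the auxiliary uniform bound $R$ on $\|\tilde{\mathbf w}_t-\tilde{\mathbf w}^*\|$ and loses a factor $\sqrt{t_1}$ through Cauchy--Schwarz; the paper's iterate-distance expansion avoids both and yields the bound directly with the same $1/\sum(2\eta_t-L\eta_t^2)$ rate as the gradient sum.

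Your flagged ``main obstacle'' --- that the closed-loop dependence of $\mathbf g_s(t)$ on $\tilde{\mathbf w}_{t-1}$ breaks the i.i.d.\ assumption --- is a legitimate concern, but the paper does not address it either: it simply posits the uniform variance bound $\mathbb{E}_n[\|\nabla J_n-\nabla F\|^2]\le\sigma^2$ and proceeds. So you need not supply the martingale-difference machinery to match the paper's level of rigor.
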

\begin{proof} See Appendix \ref{Appdx5}. \end{proof}
\begin{lemma}\label{lemma4}
Given the learned classifier model, the resource allocation component approaches the minimal complaining rate while ensuring the stability of the system queue, i.e.,
\end{lemma}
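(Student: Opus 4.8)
The plan is to give a Lyapunov drift-plus-penalty bound and then take time averages, which is the standard route for this type of claim. Concretely, I would state Lemma~\ref{lemma4} as asserting two things simultaneously: (i) queue stability, $\bar Q_k = \lim_{T\to\infty}\frac{1}{T}\sum_{t=0}^{T}\mathbb{E}[Q_k(t)] < \infty$ for every $k$, so that by Lemma~\ref{le2} the long-term average constraint \eqref{consQ_stable} holds; and (ii) an $O(1/\theta)$-optimality guarantee, i.e.\ $\lim_{T\to\infty}\frac{1}{T}\sum_{t=1}^{T} h\big(C(t)+\tilde{\mathbf a}^\mathsf{T}(t)\tilde{\mathbf r}(t)\big) \le h^{\mathrm{opt}} + D/\theta$, where $h^{\mathrm{opt}}$ is the optimal time-average positive rate of $\mathscr{P}_{\textrm{1}}$ and $D$ is the constant from Lemma~\ref{le1}.

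\textbf{Step 1.} Start from the per-slot upper bound in Lemma~\ref{le1}: $\mathbb{E}[\Delta V(t) + \theta h(\cdot)\,|\,\mathbf Q] \le D + \sum_k Q_k(t)\mathbb{E}[\tilde r_k(t)\,|\,\mathbf Q] + \theta\,\mathbb{E}[h(\cdot)\,|\,\mathbf Q]$. \textbf{Step 2.} Since Algorithm~\ref{OQRAA} chooses $\tilde{\mathbf r}(t)$ to minimize exactly the right-hand side expression $\sum_k Q_k(t)\tilde r_k(t) + \theta h(\cdot)$ over the feasible box (this is problem $\mathscr{P}_{\textrm{2}}$, solved optimally by the KKT analysis of Properties~\ref{pro1}--\ref{pro2}), the value achieved is no larger than that of \emph{any} feasible stationary (even randomized) policy, in particular the one that achieves the optimal time-average rate $h^{\mathrm{opt}}$ of $\mathscr{P}_{\textrm{1}}$ while satisfying $\mathbb{E}[\tilde r_k(t)] \le \bar r_k$. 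Plugging that comparison policy in gives $\mathbb{E}[\Delta V(t)\,|\,\mathbf Q] + \theta\,\mathbb{E}[h(\cdot)\,|\,\mathbf Q] \le D + \theta h^{\mathrm{opt}}$ (the $\sum_k Q_k(t)(\mathbb{E}[\tilde r_k]-\bar r_k)$ term is $\le 0$ for that policy; more carefully one invokes a Slater-type slackness $\bar r_k - \epsilon$ to also pull out the queue term). \textbf{Step 3.} Take total expectation, sum over $t=1,\dots,T$, telescope $\sum_t \mathbb{E}[V(t+1)-V(t)] = \mathbb{E}[V(T+1)] - \mathbb{E}[V(1)] \ge -V(1)$, divide by $\theta T$, and let $T\to\infty$: the penalty term yields $\limsup_T \frac1T\sum_t \mathbb{E}[h(\cdot)] \le h^{\mathrm{opt}} + D/\theta$. \textbf{Step 4.} For stability, redo Step 3 keeping the queue term: with an $\epsilon$-slack feasible policy, $\sum_k \bar Q_k \le (D + \theta(h_\epsilon^{\mathrm{opt}} - h^{\mathrm{opt}}_{\min}))/\epsilon < \infty$, where $h^{\mathrm{opt}}_{\min}$ is a lower bound on $h$ (note $h\in(0,1)$, so $h$ is trivially bounded, which is convenient). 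Then Lemma~\ref{le2} closes the argument for \eqref{consQ_stable}.

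\textbf{The main obstacle} I expect is justifying the existence of the comparison policy in Step~2 with the required slackness: I need to invoke that $\mathscr{P}_{\textrm{1}}$ is feasible with some strictly interior margin $\epsilon>0$ (a Slater / boundedness condition on $\bar r_k$ relative to $B_k$), and that an i.i.d.\ or stationary randomized policy attaining $h^{\mathrm{opt}}$ within that margin exists — this is the usual ``optimality over the class of stationary policies'' result for Lyapunov optimization, but it must be stated as an assumption or cited (e.g.\ from \cite{stochasticBook}). A secondary subtlety is that here the ``penalty'' $h(\cdot)$ depends on the time-varying, classifier-dependent quantities $C(t)$ and $\tilde{\mathbf a}(t)$, so strictly speaking the drift bound is conditional on the classifier state; since Lemma~\ref{lemma4} is stated ``given the learned classifier model,'' I would condition on the event that the classifier has converged (invoking Lemma~\ref{lemma3}, so $C(t)$ and $\tilde{\mathbf a}(t)$ are essentially frozen for $t > t_1$) and run the drift argument over the tail $t > t_1$, absorbing the finite prefix into the vanishing $1/T$ factor. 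The remaining algebra — the telescoping sum, the box-constraint boundedness of $\tilde r_k(t) \le B_k$ used to re-derive $D$ — is routine.
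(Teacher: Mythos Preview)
Your proposal is correct and follows essentially the same Lyapunov drift-plus-penalty route as the paper: the paper also compares against an $\varepsilon$-slack feasible policy (constructed via the restricted set $\Lambda_\varepsilon=\{\mathbf r(t):\mathbf r(t)+\varepsilon\in\Lambda\}$), plugs it into the bound of Lemma~\ref{le1}, telescopes and divides by $T$ to obtain both the queue bound $(D+\theta)/B_{\max}$ and the penalty bound $D/\theta$, and then lets $\varepsilon\to 0$ (using a Taylor-expansion continuity argument on $u[\cdot]$) to replace the $\varepsilon$-restricted optimum by $u[\mathbf r^*(t)]$. Your identification of the Slater condition as the main technical point to justify, and of the tail argument $t>t_1$ given the converged classifier, is exactly right.
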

\begin{equation}
\mathop {\lim \sup }\limits_{T \to \infty }\frac{1}{T}\sum\limits_{t = t_1}^{T - 1}\sum\limits_{k = 1}^{K}\mathbb{E}\left[Q_k(t)\right] \le\frac{ D+\theta u[{{\tilde{\mathbf r}}^*}(t)]}{ B_{\max}},
\end{equation}
and
\begin{equation}
\mathop {\lim \sup }\limits_{T \to \infty } \frac{1}{T}\sum\limits_{t = t_1}^{T - 1} \mathbb{E}\left[u[{{\tilde{\mathbf r}}}(t)]\right]  \le \mathbb{E}\left[u[{{\tilde{\mathbf r}}^*}(t)]\right] +\frac{ D}{\theta},
\end{equation}
where $B_{\max}$ is the maximum of resources allocated to each super user and ${{\tilde{\mathbf r}}^*}(t)$ is the optimal resource allocation policy.
\begin{proof} See Appendix \ref{Appdx6}. \end{proof}

\noindent\textbf{Remark 3:} Lemma \ref{lemma3} demonstrates that the classifier can converge within $10^{-3}$ given a limited time.
In this case, the learned model changes slowly in which its impact on the resource allocation model is negligible.

\noindent\textbf{Remark 4:} Combining  Lemma \ref{lemma3} and Lemma \ref{lemma4},
it is reasonable to conclude that the CORA system can converge and achieve a minimum time-average complaining rate while ensuring the stability of the queue.

To sum up, we exploit the CORA framework to improve the overall user experience for homogenous users.

We successfully solve this resource allocation by applying the data-driven OOQRA algorithm based on the Lyapunov optimization technique and the optimization theory.
Theoretical results show that the proposed algorithm can converge fast.

\section{Robust Online Queue Resource Allocation for Heterogeneous Users}\label{V}
In this section, we first investigate the effect of the resource coefficient on this resource allocation problem.
Then, we propose an improved CORA framework for heterogeneous users' resource allocation.
Finally, we put forth the ROQRA algorithm for this QoE maximum problem.
 \begin{figure}[!t]
	\centering
	\includegraphics[width=0.85\columnwidth]{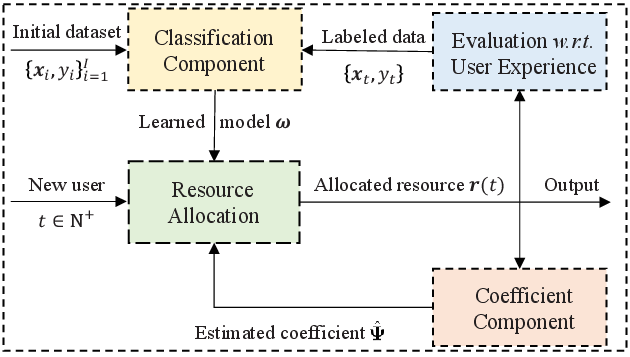}
	\caption{The improved CORA framework for the case of heterogeneous users.}
	\label{system_model}
\end{figure}

\subsection{The Effect of Resource Coefficients and the Improved CORA Framework}
In the heterogeneous user case, we assume that different users have different resource coefficient matrices $\mathbf{Z}(t)$ and only one user requests a service from the edge server at each time slot.
Thus, we omit the user index $s$ in the following.
Without loss of generality, we assume that the elements in matrix $\mathbf {Z}(t)$ follow the uniform distribution in the range of $[0, 1]$ with the same variance and different means.
In fact, the coefficient matrix $\mathbf {Z}(t)$ can be any random variable due to the user diversity.
For example, If some users may have special requirements (i.e., an extremely small $\mathbf Z(t)$),
a large part of the resources will be allocated to them to reduce their complaining rates.
In this way, other users only have fewer resources to ensure the stability of the queue in the CORA system.
This will reduce the overall user experience because of large congestion in the system.
To alleviate this, we use the estimated coefficient matrix to allocate the resource rather than the random variable $\mathbf {Z}(t)$.

Let $ \hat {\mathbf \Psi}=[\mathbf \psi_1,\ldots, \mathbf \psi_D ]^{\mathsf{T}} \in \mathbb R^{D \times K}$ be the estimated coefficient matrix at time slot $t$.
Our goal is to obtain the actual feature value $ \hat {\mathbf \Psi}$ over time slot $T$.
In other words, we expect that the feature update $\hat {\mathbf g}(t)=  \mathbf x(t)+\hat{ \mathbf \Psi}{\mathbf{r}}(t)$ not only can overcome the stochastic property of $\mathbf Z(t)$,
but also not deviate too much from the actual feature values $\mathbf g(t) = \mathbf x(t)+\mathbf Z(t){\mathbf{r}}(t)$.
However, the challenge is that estimating all elements in matrix $\mathbf Z(t)$ will result in a huge communication cost due to the various resources and high dimension of user features.
To overcome this, we adopt the UCB algorithm to efficiently learn the coefficient matrix $ \hat {\mathbf \Psi}$.

Fig. \ref{system_model} illustrates an improved CORA framework for the heterogeneous user case by adding a coefficient estimation component to the original CORA framework.
Similarly, the aim of the improved CORA framework is to minimize the time average complaining rate of the heterogeneous users.
In fact, the coefficient component plays a critical role in learning the coefficient matrix by estimating from the stochastic variable $\mathbf Z(t)$.
At the beginning, the coefficient component is randomly initialized, replacing $\hat {\mathbf \Psi}(t)$ to guide resource allocation.
Then, the coefficient is further optimized in each time slot.
In this way, the problem formulation of the classifier component and resource allocation component can remain unchanged except that the coefficient matrix $\mathbf Z$ in the expression of $\mathbf{g}(t)$ is replaced by $\hat{\mathbf {\Psi}}$.

\subsection{Robust Online Queue Resource Allocation Algorithm}\label{MABAlg}
In the following, we present the ROQRA algorithm to solve problem ${\mathscr{P}_{\textrm{2}}}$  for the case of heterogeneous users, as shown in \textbf{Algorithm} \ref{FBOQRA}.
Compared with the OQRA algorithm, we model the coefficient estimation problem as an MAB problem and then adopt the UCB algorithm to learn the coefficient matrix $\hat {\mathbf \Psi}$ within the time horizon $T$.
In this MAB problem, the edge server is the player, and the elements in matrix $\mathbf{Z}$ are the arms.
At each time slot, a new user enters the MEC system and establishes a connection with the player.
Before resource allocation, the user will send the coefficient matrix $\mathbf{Z}(t)$ (i.e., the demand of resource) to the player for service.
Then, the player will observe a reward  $z_{d,k}(t)r_k(t)$, which reflects the effect of the $k$-th resource on the $d$-th feature.
Note that the allocated resource $r_k(t)$ is obtained by using the estimated coefficient $\hat{ {\psi}}_{d,k}$ in $\hat {\mathbf \Psi}$ rather than $\mathbf{Z}(t)$.
\begin{algorithm}[!t]
	\caption{Robust Online Queue Resource Allocation (ROQRA) Algorithm}
	\label{FBOQRA}
	\KwIn{}{$\theta$, $\bar {\mathbf r}$, $\mathbf z(t)$, $Q_k(0) = 0,  \bar  R_{d,k}(0) = L_{d,k}(0) = 0, \forall d, k$\;}
	Construct the classifier model by training the initial overall dataset to obtain $w_{0,0}, \mathbf w_0$ and set ${\mathbf Q}(0)= \mathbf 0$\;
	\For{$t= 1,2, \ldots ,T$}{	
		  {\bf Step 1}: Estimate the coefficient matrix $\hat {\mathbf \Psi}$ using the UCB algorithm in \eqref{ucb-index} \:

		  {\bf Step 2}: Take this coefficient matrix into \textbf{Alg.} \ref{OQRAA}

          {\bf Step 3}: Update the overall dataset and the weights using \eqref{Label_Updat01} and \eqref{Classifier_Updat02}, respectively.\:

          {\bf Step 4}: Update the estimation of $\bar  \psi_{d,k}(t) $ using \eqref{ucb-index-2} and \eqref{ucb-index-3}.
		}
\KwOut{$\left[\mathbf r(1), \mathbf r(2), \dots,\mathbf r(T)\right]$}
\end{algorithm}

From \textbf{Algorithm} \ref{FBOQRA}, we see that it first performs the UCB algorithm.
According to \cite{auer2002using}, the UCB index is computed by
\begin{equation}\label{ucb-index}
\hat \psi_{d,k}(t) = \bar  \psi_{d,k}(t-1) + c\sqrt {\frac{\log T}{L_{d,k}(t-1)}},
\end{equation}
where $c$ is a hyper-parameter that controls the exploration degree and
$\bar  \psi_{d,k}(t-1)$ denotes the average reward up to time slot $t-1$.
Symbol $L_{d,k}$ is the number of unit resources allocated by the $(d,k)$-th arm.
Note that the first term of \eqref{ucb-index} is the average empirical reward that is used for exploitation;
while the second term is an upper bound of $\bar \psi_{d,k}(t-1)$ that accounts for exploration.
Then, the estimated coefficient matrix $ \hat {\mathbf \Psi}$ is exploited to perform the resource allocation as in \textbf{Algorithm} \ref{Data-Driven Online} by using the drift-plus-penalty method.
After allocation, the testing result and  classification model are updated by
\begin{equation}\label{Label_Updat01}
N_t = N_{t-1}\cup\left\{\mathbf x(t)+\hat {\mathbf \Psi} \mathbf r(t),y(t)\right\},
\end{equation}
and
\begin{equation}\label{Classifier_Updat02}
\tilde{{\mathbf w }}_{t} = \tilde{{\mathbf w }}_{t-1} -\frac{\eta_{t}}{N_{t}}\nabla f_t(\tilde{{\mathbf w }}_{t-1}),
\end{equation}
respectively.	
Finally, the average reward $\bar  \psi_{d,k}(t)$ required in the UCB algorithm is updated by
\begin{equation}\label{ucb-index-2}
\bar  \psi_{d,k}(t) = \frac{\bar \psi_{d,k}(t-1)L_{d,k}(t-1) +z_{d,k}(t)r_k(t)}{L_{d,k}(t)},
\end{equation}
and
\begin{equation}\label{ucb-index-3}
L_{d,k}(t) = L_{d,k}(t - 1) +r_k(t).
\end{equation}
\textbf{Algorithm}  \ref{FBOQRA} performs the above steps repeatedly until it reaches the stopping time $T$.

To sum up, the ROQRA algorithm works in a dual closed-loop mode and has two main merits.
First, it overcomes the stochastic property of $\mathbf Z(t)$ through the average operation, avoiding the case where most resources are allocated to users with a small coefficient $\mathbf Z(t)$.
Second, instead of spending much time exploring all possibilities, it leverages the optimistic estimation (i.e., the UCB algorithm) to shorten the exploration time while ensuring that the valuable arms are exploited as many times as possible.

\subsection{\com{Complexity Analysis}}\label{CompAnal}
\com{At last, we provide a computational complexity analysis for the data-driven ROQRA algorithm.
We see that \textbf{Algorithm} \ref{FBOQRA} contains four steps.
Step 1 involves recalling equation \eqref{ucb-index} $DK$ times, where $D$ is the number of user features and $K$ is the number of resources. Therefore, the computational complexity of step 1 is $\mathcal{O}(TDK)$, where $T$ is the number of iterations.
Step 2 constitutes the main part of the computational complexity, as it requires calling \textbf{Algorithms} \ref{OQRAA} and \ref{Data-Driven Online} iteratively. The computational complexity of \textbf{Algorithm} \ref{OQRAA} is $\mathcal{O}(TK)$ because it involves updating the resource allocation matrix and calculating the utility function for each resource. On the other hand, the complexity of \textbf{Algorithm} \ref{Data-Driven Online} is $\mathcal{O}(T(K + DK))$, which includes predicting the complaint rate, selecting the optimal allocation, and updating the classifier model parameters.
Steps 3 and 4 involve updating the dataset and UCB index, respectively. The complexity of these steps is $\mathcal{O}(T)$.

Considering the initial dataset $I$, the overall computational complexity of the ROQRA algorithm can be expressed as
$(\mathcal{O}(I) + \mathcal{O}(TDK) + \mathcal{O}(T(K + DK)) + \mathcal{O}(T))$, which is equivalent to
$\mathcal{O}(I + TDK)$.
Since $I$ is a constant that is much smaller than $TDK$, the overall computational complexity of the ROQRA algorithm can be simplified to $\mathcal{O}(TDK)$.
In summary, the computational complexity of the data-driven ROQRA algorithm is dominated by steps 1 and 2, which involve updating the coefficient matrix and performing the optimal resource allocation. The overall complexity is $\mathcal{O}(TDK)$, indicating that the algorithm scales linearly with the number of iterations, features, and resources.}

\section{Numerical Results}\label{VI}
We conduct extensive experiments to evaluate the proposed algorithms on the synthetic and YouTube datasets.
The synthetic dataset is obtained using the Gaussian mixture distribution,  which allows us to study the performance of the proposed algorithms based on the  known ground truth;
while the YouTube dataset is used to validate the performance of the proposed algorithms.
All results are from $10^3$  independent Monte Carlo trials.

We adopt the $\varepsilon$-Perturbed DualHet algorithm \cite{DD2017} as our performance baseline.
Although this algorithm is designed for offline resource allocation, it can still be applied in our setting by considering that there is only one user in each iteration.
As suggested in \cite{DD2017} , the parameter $\varepsilon$ is chosen to be  $\varepsilon= \{0.4/t, 0.4/\mathop{\log}(t+1), 1 \}$ for comparison.
Other parameters are set in the same way as the proposed algorithms.

\subsection{The Case of Homogenous Users}
We first evaluate the OOQRA algorithm on the synthetic dataset for the case of homogenous users.
As the ground truth of the classification boundary is known,
we expect that the classifier can be optimized to approximate this ground truth.
Then, we exploit the YouTube dataset to validate the OOQRA algorithm.

\subsubsection{2D Gaussian Data}\label{Guassian}
The instances in the synthetic dataset are uniformly generated from the two-dimensional Gaussian distribution with mean $(10,10)$ and $(-10,-10)$,  and the same covariance $[20,\ 0; 0,\ 20]$.
The total number of instances is  $6,600$, where $600$ instances are regarded as the  initial  dataset;
while the remaining $6,000$ instances are assumed to arrive in sequence for service.
Assume that there is only one type of resource (e.g., the bandwidth) to allocate to the $6,000$ instances. Then, it will influence the second dimensional of user feature $x_2$, representing the change in the user's throughput.
To realize this, we set the resource coefficient matrix $\mathbf{Z}$ as $[0, 0; 0, 1]$, i.e., one resource unit increases the feature value by one unit.
In addition, assume that the resources allocated to each instance are bounded by $r_2(t)\le20-x_2(t)$.

The probability obtained by the Gaussian density function is viewed as the ground truth.
In addition, we consider the approximated ground truth to show the effect of classifier optimization,  where an instance with feature vector $\mathbf x$ is positive if $x_1\le 0$  and $x_2\le 0$.
In the following, we compare the performance of the OOQRA algorithm under the known ground truth and the approximated ground truth.
\begin{figure*}[!t]
\centering
\subfloat[$t=0$]{\includegraphics[width=2in]{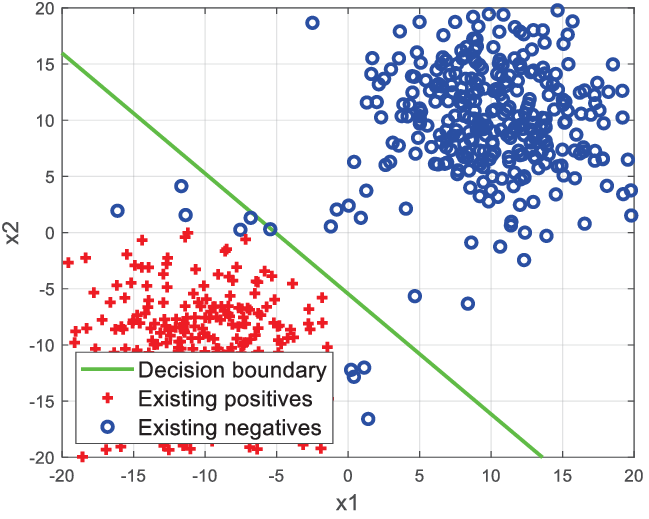}%
\label{evolution1}}
\hfil
\subfloat[$t=1,000$]{\includegraphics[width=2in]{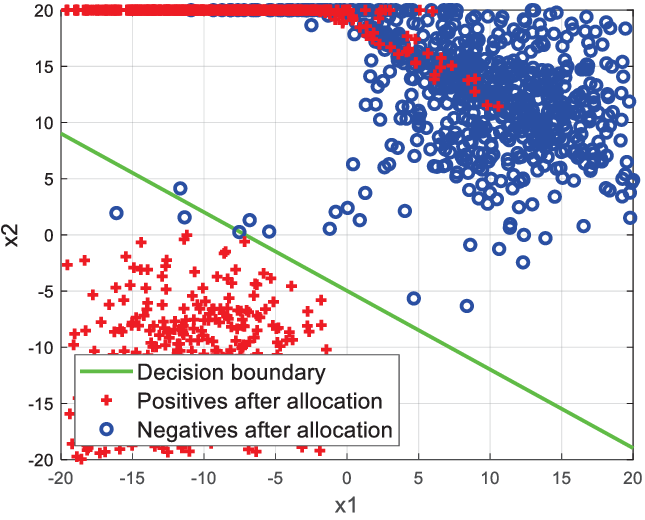}%
\label{evolution2}}
\hfil
\subfloat[$t=6,000$]{\includegraphics[width=2in]{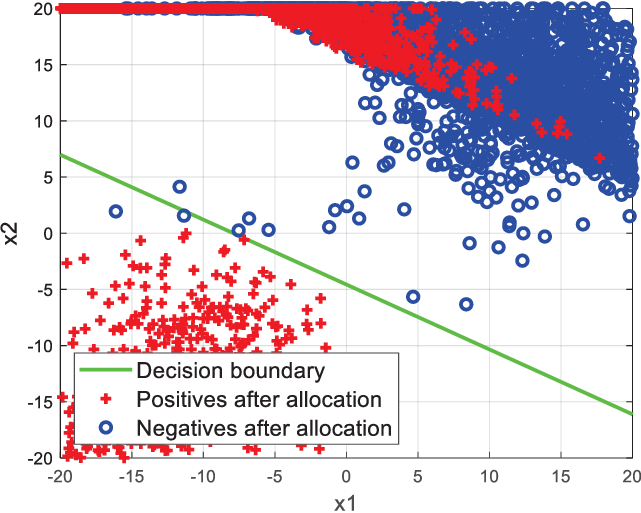}%
\label{evolution4}}
\caption{The evolution of the decision boundary for the classification component by running the OOQRA algorithm under $\bar r_2 =15$.}
\label{evolution}
\end{figure*}
Fig. \ref{evolution} shows the process of the decision boundary for the classifier component by running the OOQRA algorithm in the synthetic dataset where $\bar r_2 =15$.
The original positive and negative instances are marked as ``+'' and ``o'', respectively.
We see from  Fig. \ref{evolution1}  that the logistic regression model works efficiently under the initial $6,000$ dataset.
From  Figs.  \ref{evolution2} and  \ref{evolution4}, it is seen that some original positive instances have been moved across the decision line,
where some complaint users change to negative after resource allocation.
Fig. \ref{cvg} shows the convergence of the time average positive rate and queue length, as well as the arithmetic square root of the classier weights for the OOQRA algorithm.
We see that they trend to stable when a suitable classifier model is found.
This demonstrates that the OOQRA algorithm can converge and guarantee the stability of the virtual queue.
 \begin{figure}[t]
	\centering
	\includegraphics[width=0.8\columnwidth]{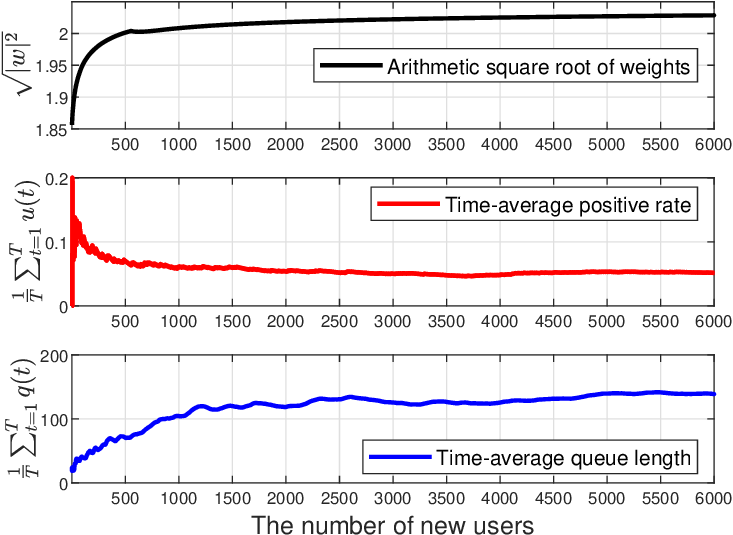}
	\caption{The convergence of the  OOQRA algorithm where $\bar r_2 =15$.}
	\label{cvg}
\end{figure}

Next, we study the evolution of time-average positive rate and queue length with different settings of the drift-plus-penalty control parameter $\theta$ in the OOQRA algorithm.
Fig. \ref{MQP}  depicts that the time average positive value gradually decreases and stabilizes as $\theta$ increases.
In contrast, the time-average queue length or delay increases quasi-linearly as $\theta$ increases,
which validates that there is a trade-off between time-average positive rate and delay as discussed in \ref{transform}.
Therefore, it is reasonable to carefully design the parameter $\theta$ to ensure the system stability for the OOQRA algorithm.
\begin{figure}[!t]
	\centering
	\includegraphics[width=0.8\columnwidth]{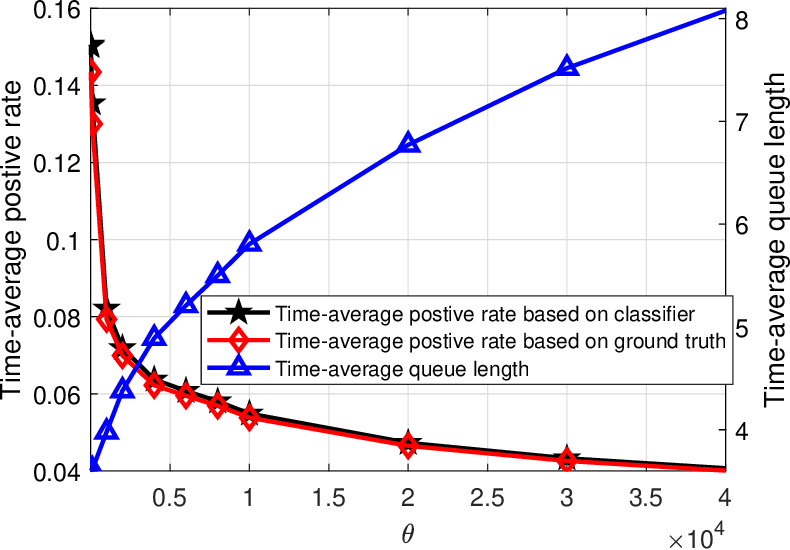}
	\caption{The drift-plus-penalty control parameter $\theta$ vs Time-average positive rate and Time-average queue length by running the OOQRA algorithm where $\bar r_2=10$.}
	\label{MQP}
\end{figure}

Finally, we show the performance of OOQRA and baseline algorithms with different settings.
We see from Fig. \ref{Resource_vs_pos} that the baseline algorithms with different probabilities require $2$-$3$ times more resources than the OOQRA algorithm to achieve the same performance.
The OOQRA algorithms with ground truth and with classifier can  reduce the time-average positive rate from $40\%$ to $10\%$ when $\bar{r}_2 = 10$,
while the baseline algorithms only reduce the time-average positive rate from $40\%$ to $30\%$.
Therefore, Fig. \ref{Resource_vs_pos}  demonstrates that the OOQRA algorithm can efficiently handle the online resource allocation.
\begin{figure}[!t]
	\centering
	\includegraphics[width=0.8\columnwidth]{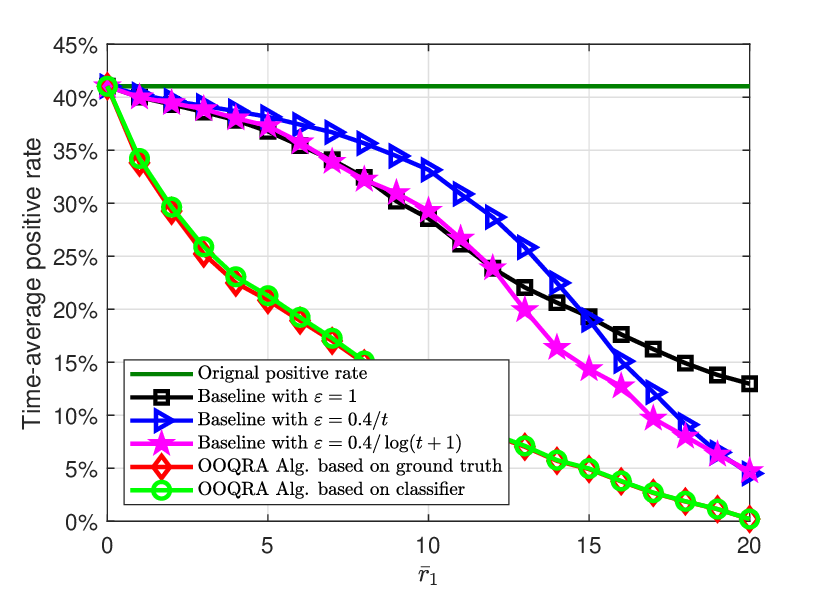}
	\caption{The performance of OOQRA and baseline algorithms with different settings in the synthetic dataset.}
	\label{Resource_vs_pos}
\end{figure}

\subsubsection{YouTube Video Data}
The causal relationship between feature values and user experience can be established from reality by properly conducting controlled experiments.
In the following, we evaluate the proposed algorithm with the YouTube video dataset in \cite{youscore}.
This dataset is used to study the interaction between the content of the YouTube video, network QoS, and QoE.
The videos are selected according to video bitstream to distinguish different video contents.
According to \cite{youscore}, the video bitstream is a quasi-linear function of the required bandwidth (i.e., resource).
Specifically,  the required bandwidth yields a smooth payout growth when the video bitstream increases.

There are a total of $2,269$ videos in the YouTube dataset.
We assume that the $2,269$ users are watching the videos, and each user watches one video.
We randomly select the feature values and QoE labels of $269$ users to construct the initial overall dataset.
The remaining $2,000$ users are assumed to arrive one by one.
The allocated resource is the bandwidth, where one unit of bandwidth can increase one unit of $x_2$.

In the following, we investigate the ground truth relationship between the network QoE and the allocated resource.
The network QoE contains the features of the round-trip time (RTT) (i.e., $x_1$) and the downlink bandwidth (i.e., $x_2$).
The ranges of the RTT and bandwidth of each video are $[40, 1000]$ ms and [0, 10] Mbps, respectively.
The QoE label is positive if the video is considered ``unacceptable''  and negative if the video is considered ``acceptable''.
Fig. \ref{RTT} shows the relationship between the RTT and positive rate.
It is seen that the positive rate increases with the RTT.
By using the curve fitting method, we can approximately estimate this relationship by
\begin{equation}
p({x_1}) =  -1.1x^2_1+2.3x_1-0.2.
\end{equation}
\begin{figure}[!t]
\centering
\subfloat[RTT v.s.  positive rate] {\label{RTT}
\includegraphics[width=0.45 \columnwidth]{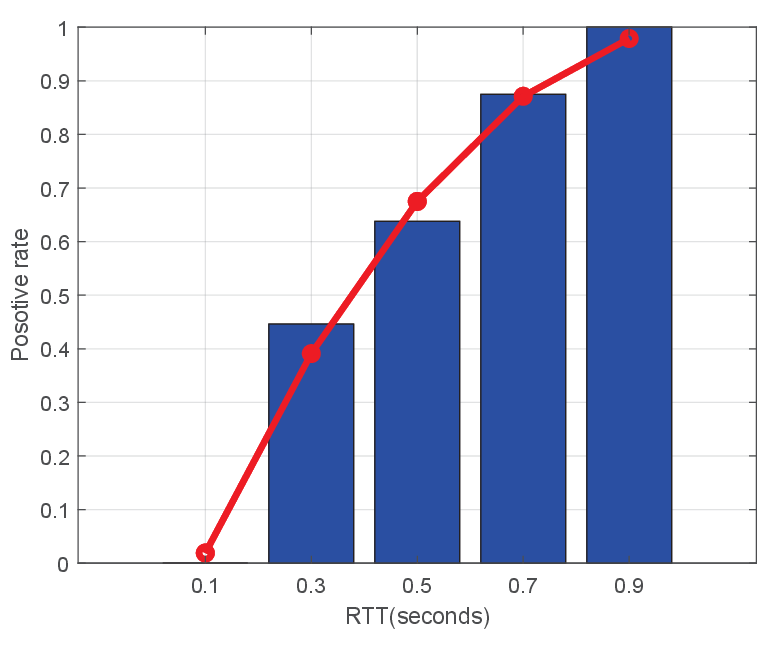}}   
\hfil
\subfloat[The ground truth] { \label{groundTruth3}   
\includegraphics[width=0.45 \columnwidth]{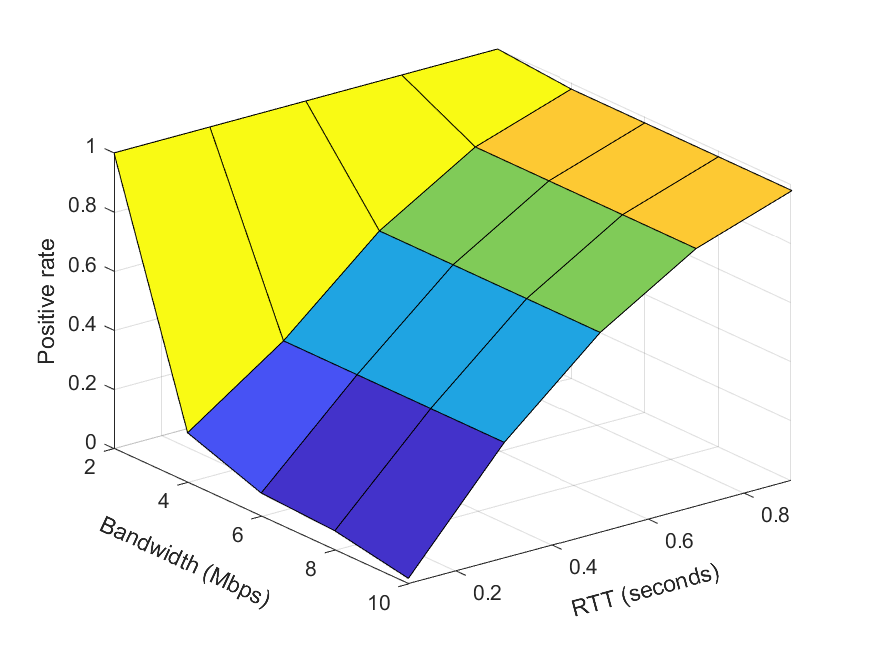}}   
\caption{(a) The relationship between  RTT and positive rate in the YouTube dataset, where the solid line is estimated from the real dataset using the curve fitting method. (b) The ground truth in the YouTube dataset.}  
\label{GroundTruthYouT}   
\end{figure}
Fig. \ref{groundTruth3}  shows the combined effect of RTT and bandwidth on users' positive rates.
It is seen that the videos are unaccepted with a probability of $1$ when the bandwidth is less than $4$ Mbps.
In addition, the acceptance of the videos depends on the RTT when the bandwidth is over $4$ Mbps.
Therefore, we have the following ground truth, i.e.,
\begin{equation}
G(\mathbf x) = \mathop {\max}(-1.1x^2_1+2.3x_1-0.2,\mathbf{1}(x_2<4)).
\end{equation}

\begin{figure}[!t]
\centering
\subfloat[Convergence analysis] {\label{pratical_r1_1M2}
\includegraphics[width=0.45 \columnwidth]{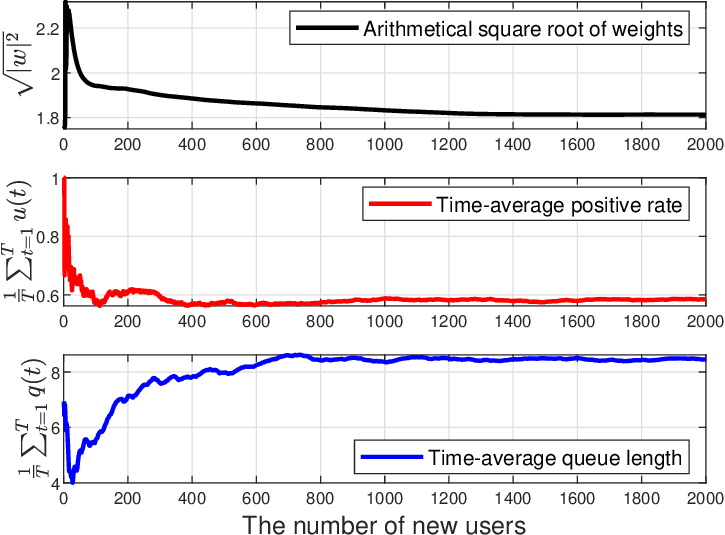}}   
\hfil
\subfloat[Performance compare] { \label{pratical_perfomance}   
\includegraphics[width=0.45 \columnwidth]{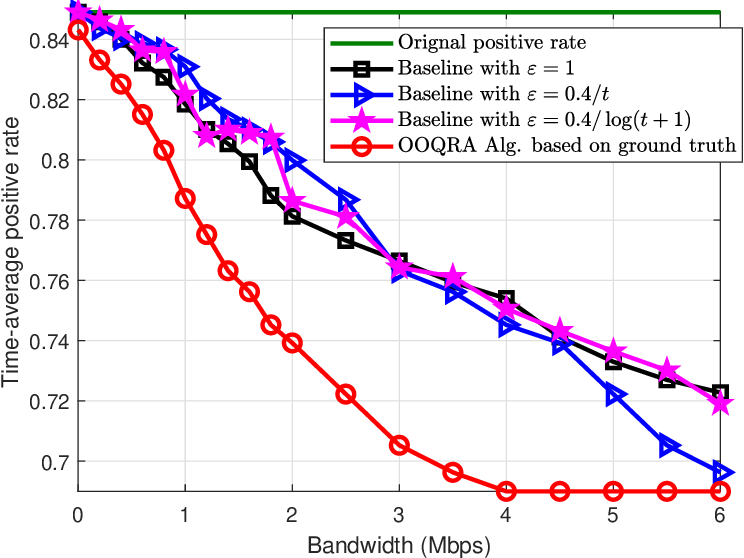}}   
\caption{(a) The convergence of the CORA system in the YouTube dataset, where $\bar r_1 = 1$ Mbps. (b) The performance of the OOQRA algorithm and the baseline algorithm in the YouTube dataset.}  
\label{OOQRA_YouTube}   
\end{figure}

Fig. \ref{pratical_r1_1M2} shows the convergence of the CORA system by running the OOQRA algorithm where $\bar r_1 = 1$ Mbps.
We see that the CORA system gradually converges to a stable state.
Meanwhile, Fig. \ref{pratical_perfomance} compares the performance of the OOQRA algorithm and the baseline algorithm.
It is seen that the complaining rate is reduced by up to $16\%$ by allocating additional bandwidth to users.
Moreover, we see that the resource efficiency is relatively high when the resource is small;
while it decreases when the resource becomes abundant.
Therefore, Fig. \ref{pratical_perfomance} demonstrates that it is possible for cloud operators to improve the overall user experience with a small price (low resource consumption).

\subsection{For the Case of Heterogeneous Users}
We investigate the effect of the coefficient component in the improved CORA framework on the synthetic dataset\footnote{Note that measuring the coefficient matrix $\mathbf Z$ in realistic is costly. }.
We assume that $\mathbf Z(t)$ is uniformly and independently distributed in $[0, 1]$.
The upper bound of the resources allocated to each user is replaced by $r_2 \le (20-x_2)/z_2$.

\begin{figure}[!t]
\centering
\subfloat[Convergence analysis] {\label{fcvg}
\includegraphics[width=0.45 \columnwidth]{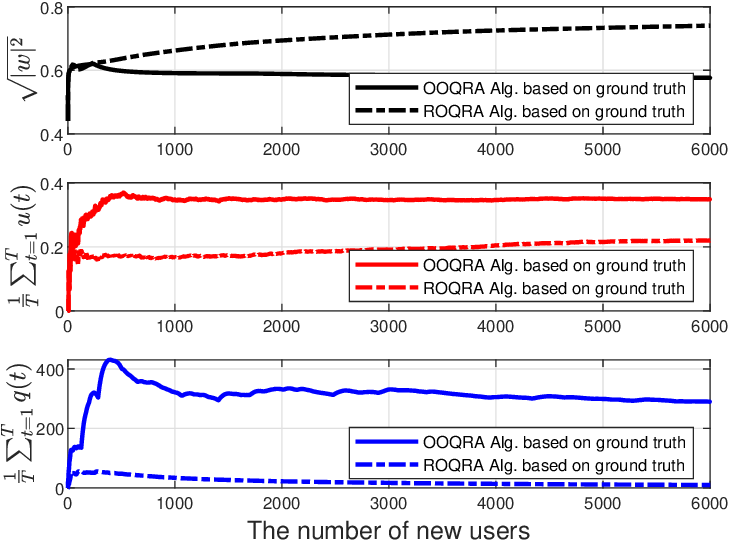}}   
\hfil
\subfloat[Performance compare] { \label{pond}   
\includegraphics[width=0.45 \columnwidth]{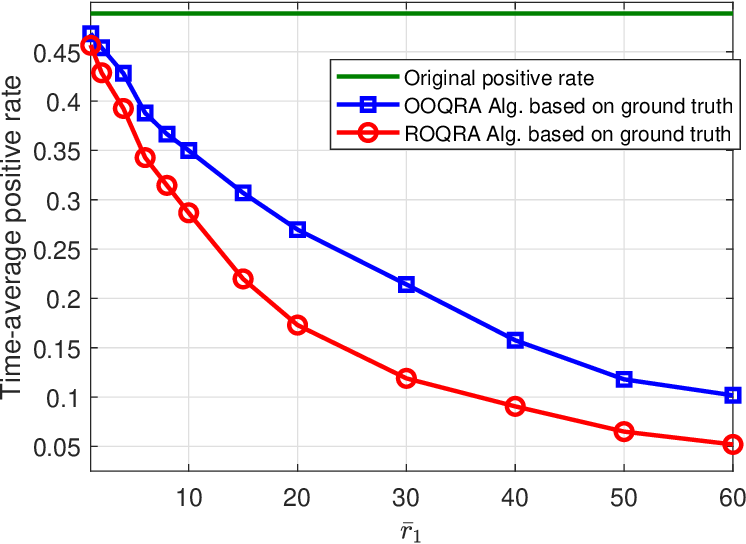}}   
\caption{(a) Convergence of the feedback-based system the synthetic dataset. (b) The performance of the OOQRA algorithm and ROQRA algorithm in the synthetic dataset.}  
\label{ROQRA}   
\end{figure}
The convergence of the improved CORA system by running the ROQRA algorithm is shown in Fig. \ref{fcvg}.
It is seen that both algorithms reach a steady state as time increases.
In addition, the ROQRA algorithm achieves a lower time-average complaining rate and queue congestion.
In contrast, the OOQRA algorithm experiences twice congestion in the resource queue compared to that of the ROQRA algorithm.
Therefore, Fig. \ref{fcvg}  implies that the complaining rate can be further reduced by using the ROQRA algorithm.
The same result is observed in Fig. \ref{pond}, where the  ROQRA algorithm only requires half of the resources to reduce the same time-average positive rate as the OOQRA algorithm.
Besides, the ROQRA algorithm only consumes $20$ units of resource on average when the time-averaged positive rate is reduced to $18\%$, while the OOQRA algorithm requires about $40$ units on average.

\section{Conclusions and Future Works}\label{VII}
We studied the  QoE-based resource allocation problem in MEC systems while considering user diversity, limited resource, and the complex relationship between the allocated resource and user experience.
We formulated this problem as an online learning problem where the new users arrive sequentially.
Then, we proposed two data-driven algorithms to solve this online learning problem for homogenous users and heterogeneous users.
Specifically, for the case of homogenous users, we propose an OOQRA algorithm based on learned model by using the Lyapunov optimization technique and the prime-dual method.
For the case of heterogeneous users, we first provided the coefficient component to learn the resource coefficient matrix $\mathbf{Z}(t)$ by using the UCB algorithm.
Then, we proposed the ROQRA algorithm to solve this problem.
Numerical results show that the user complaints rate can be reduced by up to $100 \%$ and $18 \%$  in the synthesis and YouTube datasets, respectively.

\com{In the system model, we considered that the user experience is classified as negative or positive in the dataset.
However, it is promising to consider a multivariate classification problem instead of the binary one by considering a more complex feedback system (e.g., 1-10 grading system) and applying advanced multivariate classification methods to solve this problem.}
In addition, we assumed that only one user requests service at each time slot for the heterogeneous user case.
An interesting problem is to allow multiple users to request service at the same time as that in the homogenous user case.
Meanwhile, the convergence analysis of the proposed algorithms in the multiple users scenario needs further investigation.
These are promising yet challenging problems for future study.

\appendices
\section{proof of lemma 1} \label{Appdx1}
For convenience, we consider that only one user requests service at each time slot, i.e., the super user $S_t=1, \forall t=1,2,\ldots,T$.
Thus, we have $\mathbf x (t) = \tilde{\mathbf x} (t)$, $\mathbf a (t) = \tilde{\mathbf a} (t)$, and $\mathbf r (t) = \tilde{\mathbf r} (t)$.
According to Eqs. \eqref{updateQ} and \eqref{cond_Lya}, we have
\begin{equation*}\small
\begin{aligned}
\Delta { V}(t) & =  \mathbb{E}\left[ {V}(t + 1) - {V}(t) |{\mathbf{Q}}\right]
= \frac{1}{2}\sum\limits_{k = 1}^{K} \mathbb{E}\left[{Q_k}^2(t + 1) - {Q_k}^2(t) |{\mathbf Q} \right]\\
&= \frac{1}{2}\sum\limits_{k = 1}^{K} \mathbb{E}\left[ {\left. {{{\left( \hspace{-1pt}{\max \left\{ {Q_k(t) + r(t) \hspace{-2pt} - \hspace{-2pt} {\bar r}_k,0} \right\}} \right)}^2} - {{Q_k}^2}(t)} \right|{\mathbf Q}} \right]\\
& \buildrel (a)  \over \le  \frac{1}{2}\sum\limits_{k = 1}^{K} \mathbb{E}\left[ {\left. {{{\left( {Q_k(t) + r_k(t) - \ {\bar r}_k} \right)}^2} - {{Q_k}^2}(t)} \right|{\mathbf Q}}  \right]\\
&=  \sum\limits_{k = 1}^{K}Q_k(t) \mathbb{E}\left\{{r_k(t)|{\mathbf Q}(t)} \right\} - \sum\limits_{k = 1}^{K}Q_k(t) \mathbb{E}\left[{\bar r_k(t)|{\mathbf Q}} \right]\\
&\quad+\frac{1}{2}\sum\limits_{k = 1}^{K} \mathbb{E}\left[ {\left. {{{\left( {r_k(t) - {\bar r}_k} \right)}^2}} \right|{\mathbf Q}} \right]\\
& \buildrel (b) \over\le  \sum\limits_{k = 1}^{K}Q_k(t)\mathbb{E}\left[r_k(t)|\mathbf Q\right] + D.
\end{aligned}
\end{equation*}
Inequality (a) holds since ${({\max \{ {X,0} \}})^2}\le{X^2}$.
In constraint (\ref{consQ_stable}), the term $\sum_{k = 1}^K\mathbb{E}\left[\left( r_k(t) - {\bar r}_k \right)^2|\mathbf Q\right]/2$ is upper bounded by $O( r^2_{\max})$, which is denoted by $D$.
Thus, inequality (b) can be obtained by removing the term $\sum_{k = 1}^{K}Q_k(t) \mathbb{E}\left[{\bar r}_k(t)|{\mathbf Q}\right]$  because it is positive.
By adding $\theta \cdot\mathbb{E}\left[h\left( C(t) + {\mathbf{a}}^\mathsf{T}(t){\mathbf{r}}(t) \right)|\mathbf Q\right]$ to both sides,  we can obtain  Lemma 1.

\section{proof of lemma 2} \label{Appdx2}
Similarly, we have $\mathbf x (t) = \tilde{\mathbf x} (t)$, $\mathbf a (t) = \tilde{\mathbf a} (t)$, and $\mathbf r (t) = \tilde{\mathbf r} (t)$ as only one super user request service at each time slot.
From Eq. \eqref{updateQ}, we have
\begin{equation*}
Q_k(t+1) \ge Q_k(t)+r_k-{\bar r}_k.\label{Q_update_transform}
\end{equation*}
Summing the above equation over $t=0,1, \ldots, T-1$, we have
\begin{equation*}
Q_k(T)-Q_k(0) \ge \sum\limits_{t = 0}^{T-1}r_k(t)- T{\bar r}_k.
\end{equation*}
By dividing both side by $T$ and taking a limit operator, we obtain
\begin{equation*}
\mathop {\lim}\limits_{T \to \infty }\frac{Q_k(T)-Q_k(0)}{T}\ge \mathop {\lim}\limits_{T \to \infty }\frac{1}{T}\sum\limits_{t = 0}^{T-1}r_k(t)-{\bar r}_k.\label{lemm1proof}
\end{equation*}
If the virtual queue is stable, we have $\mathop {\lim}\limits_{T \to \infty }\left({Q_k(T)-Q_k(0)}\right)/{T}=0$. Then, by rearranging the above equation, we can observe that constraint (\ref{consQ_stable}) is automatically satisfied.

\section{proof of property 1}\label{Appdx3}
Similarly, we have $\mathbf x (t) = \tilde{\mathbf x} (t)$, $\mathbf a (t) = \tilde{\mathbf a} (t)$, and $\mathbf r (t) = \tilde{\mathbf r} (t)$ as only one super user request service at each time slot.
Assume that the optimal resource allocation policy ${\mathbf r}(t)$ satisfies
\begin{equation}\label{pro_proof_asump}
-\frac{1}{p_{k+1}(t)} < \theta  h'\left( C(t) + {\mathbf{a}}(t)^\mathsf{T}{\mathbf{r}}(t) \right)\le-\frac{1}{p_k(t)}.
\end{equation}
Then, for $k{'} = 1, 2, \ldots, k-1$, ${\tau}_{k{'}}$ has to be positive to satisfy Eqs. \eqref{kkt_dera} and \eqref{pro_proof_asump}.
In the other words,  $r_{k{'}}$ must equal to $B_{k{'}}$ given Eq. \eqref{kkt_taur}. Meanwhile, for  $k{'} = k+1, k+2, \dots, K$, ${v}_{k{'}}$ has to be positive to satisfy Eqs. \eqref{kkt_dera} and \eqref{pro_proof_asump}.
So $r_{k{'}}$ must be equal to zero by given Eq. \eqref{kkt_vr}.

\section{proof of property 2}\label{Appdx4}
Similarly, we have $\mathbf x (t) = \tilde{\mathbf x} (t)$, $\mathbf a (t) = \tilde{\mathbf a} (t)$, and $\mathbf r (t) = \tilde{\mathbf r} (t)$ as only one super user request service at each time slot.
In the following, we omit the index $s$. If $\theta  h'\left(C(t) + \sum_{k' = 1}^{k-1}a_{k{'}}(t)r_{k{'}}(t)\right)< -{1}/{p_k(t)}$ and $B_k > 0$, then $r_k(t)$ has to be positive given Eqs. \eqref{kkt_dera}-\eqref{kkt_tau}; otherwise ${\tau}_{k}$ will be zero and Eq. \eqref{kkt_dera} is not satisfied.
Moreover, according to Eqs. \eqref{kkt_vr}-\eqref{kkt_tau},  $r_k(t)$ satisfies
\begin{equation}
Q_k(t)+ \theta a_k(t) h'\left( C(t) + {\mathbf{a}}(t)^\mathsf{T}{\mathbf{r}}(t) \right)= 0,\label{euql}
\end{equation}
when $0 < r_k(t)< B_k$, i.e., $\tau_k = v_k =0$.
 Then,  we have
\begin{equation*}
\begin{aligned}
&h'\left( C(t) + {\mathbf{a}}(t)^\mathsf{T}{\mathbf{r}}(t) \right)\\
&= \frac{- \mathop{\exp}\left(C(t)+\sum_{k' = 0}^{k-1}a_{k{'}}(t)B_{k{'}}+a_k(t)r_k(t)\right)}{\left(1+ \mathop{\exp}\left(C(t)+\sum_{k' = 0}^{k-1}a_{k{'}}(t)B_{k{'}}+a_k(t)r_k(t)\right)\right)^2}.
\end{aligned}
\end{equation*}
Let $s = \mathop{\exp}\left(C(t)+\sum_{k' = 0}^{k-1}a_{k{'}}(t)B_{k{'}}+a_k(t)r_k(t)\right)$. Then,  Eq. \eqref{euql} can be rewritten as
\begin{equation}
\beta(s) \triangleq  Q_k(t)s^2 + (2Q_k(t)- Ma_k(t))s + Q_k(t) = 0\label{Hfunc}.
\end{equation}
If $Q_k(t) > {\theta a_k(t)}/{4}$, there is no solution to Eq. (\ref{Hfunc}) and $\beta(s) \ge 0$.
Hence, the minimum solution of the objective function $Q_k(t)r_k(t)+ \theta h\left( C(t) + {\mathbf{a}}(t)^\mathsf{T}{\mathbf{r}}(t) \right)$ appears at $r_k(t) = 0$.
If $Q_k(t) \le {\theta a_k(t)}/{4}$, Eq. (\ref{Hfunc}) has two solutions, i.e.,
\begin{equation*}
s1, s2 = -1 + \frac{\theta a_k(t)}{2Q_k(t)} \pm \frac{\sqrt{\theta^{2}{a_k(t)}^{2}-4\theta a_k(t)Q_k(t)}}{2Q_k(t)}.
\end{equation*}
From $s1$ and $s2$, we can obtain two $r_k(t)$, which are denoted by $r_k^{1}(t)$ and $r_k^{2}(t)$, respectively.
By calculating the gradient of $\beta(s)$, we can observe that $r_k^{1}(t)$ is local maximum and $r_k^{2}(t)$ is local minimum.
Therefore, $r_k^{2}(t)$ is the optimal solution which ensures that the objective function $Q_k(t)r_k(t)+ \theta \cdot h\left( C(t) + {\mathbf{a}}(t)^\mathsf{T}{\mathbf{r}}(t) \right)$ is minimized when $Q_k(t) \le {\theta a_k(t)}/{4}$.
If $Q_k(t)=0$ and $r_k(t)>0$, $v_k$ has to be zero given Eq. \eqref{kkt_vr}.
Moreover, Eq. \eqref{kkt_dera} can be simplified to $\theta a_k(t)h'\left( C(t) + {\mathbf{a}}(t)^\mathsf{T}{\mathbf{r}}(t) \right)+ {\tau}_k = 0$, i.e., ${\tau}_k=-\theta a_k(t) h'\left( C(t) + {\mathbf{a}}(t)^\mathsf{T}{\mathbf{r}}(t) \right)\neq 0$.
Therefore, to satisfy Eq. \eqref{kkt_taur}, $r_k(t)$ has to be $B_k$ when $Q_k(t)=0$.

\section{proof of lemma 3}\label{Appdx5}
Similarly, we have $\mathbf x (t) = \tilde{\mathbf x} (t)$, $\mathbf a (t) = \tilde{\mathbf a} (t)$, and $\mathbf r (t) = \tilde{\mathbf r} (t)$ as only one super user request service at each time slot.
To prove this lemma, we first show three properties of the objective function $F(\tilde{\mathbf w})$ in ${\mathscr{P}_{\textrm{3}}}$, i.e., strongly convex, $L$-smooth, and gradient bounded.
According to the definitions, we have
\begin{align}
&F(\tilde{\mathbf w }) = - \mathop {\lim}\limits_{T \to \infty }\frac{1}{N_T}\sum\limits_{n = 1}^{N_T}\left\{y_nh({ \mathbf{x}}_n)+(1-y_n)(1-h( {\mathbf{x}}_n))\right\}, \notag\\
&\nabla F(\tilde{{\mathbf w }}) =\mathop {\lim}\limits_{T \to \infty }\frac{1}{N_T} \sum\limits_{n = 1}^{N_T}\left\{\left(h({\mathbf{x}}_n )-y_n\right)\tilde{\mathbf{x}}_n \right\},\notag\\
&\nabla^2 \left[F(\tilde{\mathbf w })\right]=\mathop {\lim}\limits_{T \to \infty }\frac{1}{N_T} \sum\limits_{n = 1}^{N_T}\left\{\tilde{\mathbf{x}}^T_n  \tilde{\mathbf{x}}_n h({\mathbf{x}}_n) (1-h({\mathbf{x}}_n)) \right\}\succeq 0\label{Hessian},
\end{align}
where $\nabla F(\tilde{{\mathbf w }})$ and $\nabla^2 [F(\tilde{{\mathbf w }})]$ are the gradient and the Hessian matrix of $F(\tilde{{\mathbf w }})$, respectively.
Let $\tilde{\mathbf{x}}_n = [1, {\mathbf{x}}_n]$, we have
\begin{equation*}
 \left[{\mathbf x}_n,y_n\right]=\left\{
\begin{array}{rll}
&\left[{\mathbf x}_i,y_i \right],            & \mathrm{if} \  { n \leq I};\\
& \left[{\mathbf g}(t),y(t)\right],          & \mathrm{otherwise}.\\
\end{array} \right.
\end{equation*}
It is obvious that $F(\tilde{{\mathbf w }})$ is strongly convex according to Eq. (\ref{Hessian}). Moreover, $F(\tilde{{\mathbf w }})$ is Lipschitz continuous gradient with Lipschitz constant $L$ based on the fact that $h({\mathbf{x}}_n) (1-h({\mathbf{x}}_n))\leq {1}/{4}$, and $L$ is smaller than the biggest eigenvalue of $\nabla ^2 [F(\tilde{{\mathbf w }})]$, i.e,
\begin{equation}
L \le \sigma_{\max}\left( \mathop {\lim}\limits_{T \to \infty }\frac{1}{4N_T} \tilde{\mathbf x}_{N_T}^T \tilde{\mathbf x}_{N_T}\right),
\end{equation}
where $\tilde{\mathbf x}_{N_T}$ is a vector and $\sigma_{\max}(A)$ represents the largest eigenvalue of a symmetric positive semidefinite matrix $A$. Since user feature values are usually limited, and the resources allocated to the user by the resource allocation model have an upper bound and are larger than $0$, $\tilde{\mathbf x}_n$ is limited. Without loss generality, we assume that $|\tilde{\mathbf x}_{n}| \le \tilde { x}_{\max}$. It can be further inferred that $\nabla F(\tilde{{\mathbf w }})$ is also upper bounded by $\tilde { x}_{\max}$. Moreover, $L$ can be rewritten as
\begin{equation*}
\begin{aligned}
L \le \sigma_{\max}\left( \frac{1}{4}\tilde{ x}^T_{\max}  \tilde{ x}_{\max}\right).
\end{aligned}
\end{equation*}
Based on the above three properties, we further show the convergence of the classifier model.
In this paper, we exploit the average gradient descent method with a decreasing step size to update the weights as follows
\begin{equation*}
\tilde{\mathbf w }_{t+1}=\tilde{\mathbf w }_{t}-\eta_t\left[ \frac{1}{N_t}\sum\limits_{n = 1}^{N_t} \nabla J_n(\tilde{{\mathbf w }}_t)\right],
\end{equation*}
where
\begin{equation*}
J_n(\tilde{\mathbf w }_t) = -\left\{y_nh_t({\mathbf{x}}_n)+(1-y_n)(1-h_t({\mathbf{x}}_n))\right\},\\
\end{equation*}
and
\begin{equation*}
h_t(x) = \frac{1}{1 + \exp( -\tilde{\mathbf w }^T_t x)}.
\end{equation*}
Thus, the gradient of $J_n(\tilde{\mathbf w }_t)$ is given by
\begin{equation}
\nabla J_n(\tilde{{\mathbf w }}_t) =\left[h_t({\mathbf{x}}_n )-y_n\right]\tilde{\mathbf{x}}_n.\\
\end{equation}
\indent As a further step, we define $\mathbf e^t_n = \nabla J_n(\tilde {\mathbf w}_{t})-\nabla F(\tilde {\mathbf w}_{t})$ and $\mathbf e^t=\sum_{n = 1}^{ N_t} e^t_n$. Then, $\tilde {\mathbf w}_{t+1}-\tilde {\mathbf w}_{t} = -\eta_t\left[ \nabla F(\tilde {\mathbf w}_{t})+{\mathbf e^t}/{ N_t}\right]$ and
\begin{equation*}\small
\begin{aligned}
&F(\tilde {\mathbf w}_{t+1})\buildrel  (\mathrm{a})\over\le F(\tilde {\mathbf w}_{t})+   \langle \nabla F(\tilde {\mathbf w}_{t}),\tilde {\mathbf w}_{t+1} - \tilde {\mathbf w}_{t}\rangle + \frac{L}{2}||\tilde {\mathbf w}_{t+1} - \tilde {\mathbf w}_{t}||^2\\
&=F(\tilde {\mathbf w}_{t})-\eta_t\langle \nabla F(\tilde {\mathbf w}_{t}),\nabla F(\tilde {\mathbf w}_{t})+\frac{\mathbf e^t}{ N_t}\rangle\\
&\quad + \frac{\eta^2_tL}{2}||\nabla F(\tilde {\mathbf w}_{t})+\frac{\mathbf e^t}{ N_t}||^2\\
&=F(\tilde {\mathbf w}_{t})-\frac{\eta_t}{ N_t}\langle \nabla F(\tilde {\mathbf w}_{t}),\mathbf e^t \rangle - \eta_t||\nabla F(\tilde {\mathbf w}_{t})||^2\\
&\quad +\frac{\eta^2_tL}{2}\left(||\frac{\mathbf e^t}{ N_t}||^2+2\frac{1}{ N_t}\langle \nabla F(\tilde {\mathbf w}_{t}),\mathbf e^t\rangle+||\nabla F(\tilde {\mathbf w}_{t})||^2\right),\\
\end{aligned}
\end{equation*}
where inequality (a) holds because the  $L$ is smooth.
With some transformations, the above equation can be rewritten  as
\begin{equation*}\small
\begin{aligned}
&\left(\eta_t-\frac{\eta^2_tL}{2}\right)||\nabla F(\tilde {\mathbf w}_{t})||^2 \notag\\
&\le F(\tilde {\mathbf w}_{t})-F(\tilde {\mathbf w}_{t+1})+\frac{\eta^2_tL-\eta_t}{N_t}\langle \nabla F(\tilde {\mathbf w}_{t}),\mathbf e^t\rangle\notag
+\frac{\eta^2_tL}{2 N^2_t}\left(||\mathbf e^t||^2\right)\notag\\
&\buildrel  (\mathrm{c})\over \le F(\tilde {\mathbf w}_{t})-F(\tilde {\mathbf w}_{t+1})+\frac{\eta^2_tL}{2 N^2_t}\left(||\mathbf e^t||^2\right)+\frac{\eta^2_tL}{N_t}|\langle \nabla F(\tilde {\mathbf w}_{t}),\mathbf e^t\rangle|\notag\\
&\quad +\frac{\eta_t}{N_t}|\langle \nabla F(\tilde {\mathbf w}_{t}),\mathbf e^t\rangle|\notag\\
&\buildrel  (\mathrm{d})\over \le F(\tilde {\mathbf w}_{t})-F(\tilde {\mathbf w}_{t+1})+\frac{\eta^2_tL}{2 N^2_t}\left(||\mathbf e^t||^2\right)\notag
+\frac{\eta^2_tL+\eta_t}{N_t}|\langle \tilde{x}_{\max},\mathbf e^t\rangle|,\notag
\end{aligned}
\end{equation*}
where inequality (c) is obtained by using  $|(a-b)c|\le |ac|+|bc|$;
while inequality (d) holds due to the upper bound of $\nabla [F(\tilde{{\mathbf w }})]$.
We assume that the variance of the stochastic gradient is bounded, i.e., $E_n\left[||\nabla f_n(\tilde {\mathbf w}_{t})-\nabla F(\tilde {\mathbf w}_{t})||^2\right]\le \sigma^2$ or $|e^t_n| < \sigma$, where $\sigma > 0$.
Thus, $E\left(||\mathbf e^t||^2\right)=E\left[||\sum\limits_{n = 1}^{ N_t}e^t_n||^2\right]\le E\left(||N_t e^t_n||^2\right)=  N^2_t\sigma^2$ due to $E\left[||X+Y||^2\right] \le E\left[\left(|X|+|Y|\right)^2\right]$.
Taking expectation on both sides with respect to the $\{N_t\}$, we have
\begin{align}
\left(\eta_t-\frac{\eta^2_tL}{2}\right)\mathbb{E}\left[||\nabla F(\tilde {\mathbf w}_{t})||^2\right]\notag
\le &\mathbb{E}\left[F(\tilde {\mathbf w}_{t})-F(\tilde {\mathbf w}_{t+1})\right] \\
&+\frac{\eta^2_tL\sigma^2}{2N^2_t} +\frac{\eta^2_tL+\eta_t}{N_t} \tilde{ x}_{\max}\sigma.\notag
\end{align}
Summing over $t =0,1,\dots,T$ and dividing both sides by $\sum_{t = 0}^{T}\left(\eta_t-{\eta^2_tL}/{2}\right)$, we obtain
\begin{equation}
\begin{aligned}
&\sum\limits_{t = 0}^{T}\mathbb{E}\left[||\nabla F(\tilde {\mathbf w}_{t})||^2\right] \leq \\
&\frac{2\mathbb{E}\left[F(\tilde {\mathbf w}_{0})-F(\tilde {\mathbf w}_{T})\right]+\sum_{t = 0}^{T}\frac{\eta^2_tL\sigma^2}{N^2_t}+ \sum_{t = 0}^{T}\frac{\eta^2_tL+\eta_t}{N_t} \tilde{ x}_{\max}\sigma}{\sum_{t = 0}^{T}\left(2\eta_t-\eta^2_tL\right) },\label{gradient_cvg}
\end{aligned}
\end{equation}
and
\begin{equation*}
\begin{aligned}
&||\tilde {\mathbf w}_{t+1}-\tilde {\mathbf w}^{*}||^2=||\tilde {\mathbf w}_{t}-\eta_t\frac{1}{N_t}\sum\limits_{n = 1}^{N_t}\nabla f_n(\tilde {\mathbf w}_{t})-\tilde {\mathbf w}^{*}||^2\\
 =&||\tilde {\mathbf w}_{t}-\tilde {\mathbf w}^{*}||^2-2\eta_t \langle \nabla F(\tilde {\mathbf w}_{t})+\frac{\mathbf e^t}{N_t},\tilde {\mathbf w}_{t}-\tilde {\mathbf w}^{*}\rangle \\
&+\eta^2_t||\nabla F(\tilde {\mathbf w}_{t})+\frac{\mathbf e^t}{N_t}||^2\\
 =&||\tilde {\mathbf w}_{t}-\tilde {\mathbf w}^{*}||^2-2\eta_t \langle \nabla F(\tilde {\mathbf w}_{t})+\frac{\mathbf e^t}{N_t},\tilde {\mathbf w}_{t}-\tilde {\mathbf w}^{*}\rangle \\
&+\eta^2_t\left(||\nabla F(\tilde {\mathbf w}_{t})||^2+2\langle \nabla F(\tilde {\mathbf w}_{t}),\frac{\mathbf e^t}{N_t}\rangle+||\frac{\mathbf e^t}{N_t}||^2\right)\\
\buildrel  (1)\over\le& ||\tilde {\mathbf w}_{t}-\tilde {\mathbf w}^{*}||^2-2\eta_t \langle \nabla F(\tilde {\mathbf w}_{t})+\frac{\mathbf e^t}{N_t},\tilde {\mathbf w}_{t}-\tilde {\mathbf w}^{*}\rangle\\
&+\eta^2_t\left(L\langle \nabla F(\tilde {\mathbf w}_{t}),\tilde {\mathbf w}_{t}-\tilde {\mathbf w}^{*}\rangle+2\langle \nabla F(\tilde {\mathbf w}_{t}),\frac{\mathbf e^t}{N_t}\rangle+||\frac{\mathbf e^t}{N_t}||^2\right)\\
\buildrel  (2)\over\le& ||\tilde {\mathbf w}_{t}-\tilde {\mathbf w}^{*}||^2-(2\eta_t-{L\eta^2_t})(F(\tilde {\mathbf w}_{t})-F^*)+\eta^2_t||\frac{\mathbf e^t}{N_t}||^2\\
&-2\eta_t \langle \tilde {\mathbf w}_{t}-\eta_t \nabla F(\tilde {\mathbf w}_{t})-\tilde {\mathbf w}^*,\frac{\mathbf e^t}{N_t} \rangle,
\end{aligned}
\end{equation*}
where $\tilde {\mathbf w}^*$ is the optimal weight of classifier.
We use the implication of  $L$-smooth in inequalities (1) and (2),
and further leverage the implication of convex and the fact of $\eta_t < {2}/{L}$.
With $||\tilde {\mathbf w}_{t}-\tilde {\mathbf w}^*|| \ge 0$, $|\nabla F(\tilde {\mathbf w}_{t})| \le \tilde{x}_{\max}$, and  $|\mathbf e^t| \le \sigma$, we have
\begin{equation*}
\begin{aligned}
&(2\eta_t-L\eta^2_t)\mathbb{E}\left[F(\tilde {\mathbf w}_{t})-F^*\right]\le \mathbb{E}\left[||\tilde {\mathbf w}_{t}-\tilde {\mathbf w}^{*}||^2\right]+\frac{\eta_t^2}{N_t}\sigma^2\\
&-\mathbb{E}\left[||\tilde {\mathbf w}_{t+1}-\tilde {\mathbf w}^{*}||^2\right] + \frac{2\eta^2_t\tilde{ x}_{\max}\sigma}{N_t}.
\end{aligned}
\end{equation*}
Summing over $t=0,1,\dots, T$ and dividing both sides of $\sum_{t = 0}^{T}(2\eta_t-L\eta^2_t)$,  it yields
\begin{equation}\label{loss_cvg}\small
\begin{aligned}
&\sum\limits_{t = 0}^{T}\mathbb{E}\left[F(\tilde {\mathbf w}_{t})-F^*\right]\\
&\le \frac{\mathbb{E}\left[||\tilde {\mathbf w}_{0}-\tilde {\mathbf w}^{*}||^2\right]-\mathbb{E}\left[||\tilde {\mathbf w}_{T}-\tilde {\mathbf w}^{*}||^2\right]+\sum_{t = 0}^{T}\frac{\eta^2_t}{N_t}\sigma(\sigma+2\tilde{ x}_{\max})}{\sum_{t = 0}^{T}(2\eta_t-L\eta^2_t)}\\
&\le \frac{\mathbb{E}\left[||\tilde {\mathbf w}_{0}-\tilde {\mathbf w}^{*}||^2\right]+\sum_{t = 0}^{T}\frac{\eta^2_t}{N_t}\sigma(\sigma+2\tilde{ x}_{\max})}{\sum_{t = 0}^{T}(2\eta_t-L\eta^2_t)},
\end{aligned}
\end{equation}
where $F^* = F(\tilde {\mathbf w}^{*})$.

Combine \eqref{gradient_cvg} with \eqref{loss_cvg},  we obtain
\begin{equation}\label{ee12}
\mathop {\lim}\limits_{T \to \infty }\sum_{t = 0}^{T}\mathbb{E}\left[||\nabla F(\tilde {\mathbf w}_{t})||^2\right]
= 0,
\end{equation}
 and
 \begin{equation}\label{ee13}
   \mathop  {\lim} \limits_{T \to \infty }\sum_{t = 0}^{T}\mathbb{E}\left[F(\tilde {\mathbf w}_{t})-F^*\right]=0,
 \end{equation}
 where
\begin{equation*}\small
\begin{aligned}
&\sum_{t = 0}^{T}\mathbb{E}\left[||\nabla F(\tilde {\mathbf w}_{t})||^2\right]
= \frac{2D_F+{L\sigma^2}\sum_{t = 0}^{T}\frac{\eta^2_t}{N^2_t}+ \frac{\eta^2_tL+\eta_t}{N_t} \tilde{\mathbf x}_{\max}\sigma}{\sum_{t = 0}^{T}\eta_t(2-\eta_t L)}\\
&\buildrel (1)\over\le \frac{2D_F+L\sigma^2\sum_{t = 0}^{T}\frac{\eta_t}{N^2_t}+(L+1)\tilde{\mathbf x}_{\max}\sigma\sum_{t = 0}^{T}\frac{\eta_t}{N_t}}{\sum_{t = 0}^{T}\eta_t}\\
&=\frac{2D_F+{L\sigma^2\sum_{t = 0}^{T}\frac{1}{(t+1)^2(I+t)}+(L+1)\tilde{\mathbf x}_{\max}\sigma}\sum_{t = 0}^{T}\frac{1}{(t+1)(I+t)}}{\sum_{t = 0}^{T}\eta_t}\\
&\le\frac{2D_F+{(L\sigma^2+(L+1)\tilde{\mathbf x}_{\max}\sigma)}\sum_{t = 0}^{T}\frac{1}{(t+1)^2}}{\sum_{t = 0}^{T}\eta_t}\\
&=\frac{2D_F+{(L\sigma^2+(L+1)\tilde{\mathbf x}_{\max}\sigma)}\frac{\pi^2}{6}}{\sum_{t = 0}^{T}\eta_t},
\end{aligned}
\end{equation*}
where in (1), we use the fact that $\eta_t^2 < \eta_t$ since $\eta_t < 1$ and $D_F= \mathbb{E}\left[F(\tilde {\mathbf w}_{0})-F(\tilde {\mathbf w}_{T})\right]$. Taking a limit of both sides, we have $\mathop {\lim} \limits_{T \to \infty }\sum_{t = 0}^{T}\mathbb{E}\left[||\nabla F(\tilde {\mathbf w}_{t})||^2\right]=0$ because $\mathop {\lim}\limits_{T \to \infty }\sum_{t = 0}^{T}\eta_t=\infty$. By denoting $\mathbb{E}\left[||\tilde {\mathbf w}_{0}-\tilde {\mathbf w}^{*}||^2\right]= D_X$, we also have $\mathop  {\lim} \limits_{T \to \infty }\sum_{t = 0}^{T}\mathbb{E}\left[F(\tilde {\mathbf w}_{t})-F^*\right]=0$.

To sum up, in the case of infinite time,
the classifier model can converge by using the average gradient descent algorithm.
Therefore,  we can infer that $\sum_{t = 0}^{t_1}\mathbb{E}\left[||\nabla F(\tilde {\mathbf w}_{t})||^2\right]$ and $\sum_{t = 0}^{t_1}\mathbb{E}\left[||\nabla F(\tilde {\mathbf w}_{t})||^2\right]$ can converge to the range of $\epsilon_t \le 10 ^{-3}$  by choosing appropriate parameters, after $t_1$ time slots.
This concludes the proof.

\section{proof of lemma 4}\label{Appdx6}
Define $\Lambda_\varepsilon = \{{\mathbf{r}}(t)|[{\mathbf{r}}(t)+\varepsilon]\in \Lambda\}, \forall t$, where $\Lambda \in [{\mathbf 0}, {\mathbf B}]$, and let $\widetilde{{\mathbf r}}(t)$ be the optimal solution of the following problem
\begin{equation}
\begin{aligned}
&\mathop {\textrm{minimize}}\limits_{{\mathbf{r}}(t) \in \Lambda_\varepsilon}\quad u[{\mathbf{r}}(t)].  \\
&\textrm{subject\ to}{\quad}\mathop {\lim}\limits_{T \to \infty } \frac{1}{T}\sum\limits_{t = 1}^{T}r_k(t) \le {\bar r}_k,\ \ \forall k.\\
\end{aligned}
\end{equation}
Note that, $u\left[{\mathbf{r}}(t)\right]$ is non-increasing. If $C(t)\ge 0$, it is convex.
If $C(t)< 0$ and  ${\mathbf{r}}(t) \in \left[-{C(t)}/{ {\mathbf{a}}(t)},{\mathbf B}\right]$, it is convex, and when $C(t)< 0$ and  ${\mathbf{r}}(t) \in [0,-{C(t)}/{ {\mathbf{a}}(t)}]$ it is concave.\\
\indent Since $\widetilde{{\mathbf r}}(t) \in \Lambda_\varepsilon$, we can deduce that there is a scheduling algorithm $\hat{{\mathbf{r}}}(t) \in \Lambda$ satisfying
\begin{equation}
\mathbb{E}\left[({\hat r_k(t)-{\bar r}_k)|{\mathbf Q}} \right] \le -\varepsilon,
\end{equation}
and
\begin{equation}\
\mathbb{E}\left[u[\hat{ \mathbf{r}}(t)]  |{\mathbf Q}\right] \le u[\widetilde{{\mathbf r}}(t)].
\end{equation}
Inspired by the Lyapunov optimization theory,
 the goal of {\bf Algorithm \ref{OQRAA}} is to minimize the upper bound of the ``drift-plus-penalty'' expression. Therefore, according to Lemma {\ref{le1}} and the above analysis, we obtain
\begin{equation}
\begin{aligned}
&\Delta { V}(t) +\theta  \mathbb{E}\left[u\left[\hat{\mathbf{r}}(t)\right]\right] \le D+\theta  u[\widetilde{{\mathbf r}}(t)]-\varepsilon\sum\limits_{k = 1}^{K}Q_k(t).
\end{aligned}
\end{equation}
Taking the expectation on both sides, it yields
\begin{equation}
\begin{aligned}
&\mathbb{E}\left[{ V}(t+1)\right] - \mathbb{E}\left[{ V}(t)\right] + \theta \mathbb{E}\left[u\left[\hat{\mathbf{r}}(t)\right]\right]\\
&\le - \varepsilon\sum\limits_{k = 1}^{K} \mathbb{E}\left[Q_k(t)\right] + \theta  \mathbb{E}\left[ u[\widetilde{{\mathbf r}}(t)]\right]+ D.
\end{aligned}
\end{equation}
Then, summing over $t = 0, 1,...,T-1$ and dividing by $T$, we have
\begin{equation}
\begin{aligned}
&\frac{1}{T}\left\{\mathbb{E}\left[{ V}(T)\right] - \mathbb{E}\left[{V}(0)\right]\right\} + \frac{1}{T}\theta \sum\limits_{t = 0}^{T - 1}\mathbb{E}\left[u[\hat{\mathbf{r}}(t)]\right]   \\
&\le D -{\varepsilon }\frac{1}{T}\sum\limits_{t = 0}^{T - 1}\sum\limits_{k = 1}^{K}\mathbb{E}\left[Q_k(t)\right]+\theta  \mathbb{E}\left[ u[\widetilde{{\mathbf r}}(t)]\right].
\end{aligned}
\end{equation}
By removing the non-negative terms, it yields
\begin{equation}
\begin{aligned}
\frac{1}{T}\sum\limits_{t = 0}^{T - 1}\sum\limits_{k = 1}^{K}\mathbb{E}\left[Q_k(t)\right] \le\frac{ D+\theta  \mathbb{E}\left[u[\widetilde{{\mathbf r}}(t)]\right]+\frac{\mathbb{E}\left[{ V}(0)\right]}{T}}{{\varepsilon }}.
\end{aligned}
\end{equation}
Taking a  limit operation and let $\varepsilon \le {B}_{\max}$, we have
\begin{equation}
\begin{aligned}
\mathop {\lim \sup }\limits_{T \to \infty } \frac{1}{T}\sum\limits_{t = 0}^{T - 1}\sum\limits_{k = 1}^{K}\mathbb{E}\left[Q_k(t)\right]  \le\frac{ D+\theta}{{{B}_{\max} }},
\end{aligned}
\end{equation}
and
\begin{equation}
\begin{aligned}
 \frac{1}{T}\sum\limits_{t = 0}^{T - 1} \mathbb{E}\left[u[\hat{\mathbf{r}}(t)]\right] \le \mathbb{E} \left[u[\widetilde{{\mathbf r}}(t)]\right] +\frac{ D}{\theta} + \frac{\mathbb{E}\left[{ V}(0)\right]}{T\theta}.
\end{aligned}
\end{equation}
Then, taking a  limit operation, we obtain
\begin{equation}
\begin{aligned}
\mathop {\lim \sup }\limits_{T \to \infty } \frac{1}{T}\sum\limits_{t = 0}^{T - 1}\mathbb{E} \left[u[\hat{\mathbf{r}}(t)]\right]   \le \mathbb{E}\left[u[\widetilde{{\mathbf r}}(t)]\right]+\frac{ D}{\theta} .
\end{aligned}
\end{equation}
Suppose $\left\{{{\mathbf r}^*}(t)\right\}^{T}_{t=1}$ is the optimal solution of ${\mathscr{P}_{\textrm{1}}}$. Note that ${{\mathbf r}^*}(t)\ge \widetilde{{\mathbf r}}(t)$, since $u[{{\mathbf r}}(t)]$ is non-increasing. By taking the limit $\varepsilon \to 0$
\begin{equation}
\begin{aligned}
\mathop {\lim \sup }\limits_{T \to \infty } \frac{1}{T}\sum\limits_{t = 0}^{T - 1}\mathbb{E}\left[u[\hat{\mathbf{r}}(t)\right]  \le \mathbb{E}\left[u[{{\mathbf r}^*}(t)]\right]+\frac{ D}{\theta} .
\end{aligned}
\end{equation}
Since $\Lambda_\varepsilon \subseteq \Lambda$, we have $u[\widetilde{{\mathbf r}}(t)] \ge u[{{\mathbf r}^*}(t)]$. Thus
\begin{equation}
\begin{aligned}
\mathop {\lim \inf }\limits_{\varepsilon \to 0}u[\widetilde{{\mathbf r}}(t)] \ge u[{{\mathbf r}^*}(t)].
\end{aligned}
\end{equation}
\indent Due to $\widetilde{{\mathbf r}}(t),{{\mathbf r}^*}(t) \in \Lambda$ and $\varepsilon \le B_{\max}$, we have $(1-{\varepsilon}/{ B_{\max}}){{\mathbf r}^*}(t)+{\varepsilon}/{ B_{\max}}\widetilde{{\mathbf r}}(t) \in \Lambda$. Note that $(1-{\varepsilon}/{ B_{\max}}){{\mathbf r}^*}(t)\in \Lambda_\varepsilon$. Thus, we have
\begin{equation}
\begin{aligned}
u[\widetilde{{\mathbf r}}(t)] \le u\left[(1-\frac{\varepsilon}{B_{\max}}){{\mathbf r}^*}(t)\right].
\end{aligned}
\end{equation}
By using the Taylor expansion, we have
\begin{equation}
\begin{aligned}
u[(1-\frac{\varepsilon}{ B_{\max}}){{\mathbf r}^*}(t)] &\approx -u^{'}[{{\mathbf r}^*}(t)]\frac{\varepsilon}{ B_{\max}}{{\mathbf r}^*}(t)+ u[{{\mathbf r}^*}(t)]\\
 &\le- u'\left[-\frac{C(t)}{ {\mathbf{a}}(t)}\right] \frac{\varepsilon}{B_{\max}}{{\mathbf r}^*}(t) + u[{{\mathbf r}^*}(t)].\\
\end{aligned}
\end{equation}
Hence,
\begin{equation}
\begin{aligned}
\mathop {\lim \inf }\limits_{\varepsilon \to 0}u[{{\mathbf r}^*}(t)] \ge u[\widetilde{{\mathbf r}}(t)].
\end{aligned}
\end{equation}
As a result,
\begin{equation}
\begin{aligned}
\mathop {\lim }\limits_{\varepsilon \to 0}u[{{\mathbf r}^*}(t)] = u[\widetilde{{\mathbf r}}(t)].
\end{aligned}
\end{equation}
This concludes the proof.

\balance

\bibliographystyle{./IEEEtran}
\bibliography{./IEEEabrv,./resources}

\begin{thebibliography}{10}
\providecommand{\url}[1]{#1}
\csname url@samestyle\endcsname
\providecommand{\newblock}{\relax}
\providecommand{\bibinfo}[2]{#2}
\providecommand{\BIBentrySTDinterwordspacing}{\spaceskip=0pt\relax}
\providecommand{\BIBentryALTinterwordstretchfactor}{4}
\providecommand{\BIBentryALTinterwordspacing}{\spaceskip=\fontdimen2\font plus
\BIBentryALTinterwordstretchfactor\fontdimen3\font minus \fontdimen4\font\relax}
\providecommand{\BIBforeignlanguage}[2]{{%
\expandafter\ifx\csname l@#1\endcsname\relax
\typeout{** WARNING: IEEEtran.bst: No hyphenation pattern has been}%
\typeout{** loaded for the language `#1'. Using the pattern for}%
\typeout{** the default language instead.}%
\else
\language=\csname l@#1\endcsname
\fi
#2}}
\providecommand{\BIBdecl}{\relax}
\BIBdecl

\bibitem{lin2021online}
T.~Lin, J.~Qiu, and L.~Fu, ``Online learning and resource allocation for user experience improvement in mobile edge clouds,'' in \emph{Proc. IEEE Int. Conf. on Commun. (ICC)}, Virtual, Jun. 2021, pp. 1--6.

\bibitem{zhang2019mobile}
J.~Zhang and K.~B. Letaief, ``Mobile edge intelligence and computing for the internet of vehicles,'' \emph{Proceedings of the IEEE}, vol. 108, no.~2, pp. 246--261, Feb. 2019.

\bibitem{siriwardhana2021survey}
Y.~Siriwardhana, P.~Porambage, M.~Liyanage, and M.~Ylianttila, ``A survey on mobile augmented reality with {5G} mobile edge computing: architectures, applications, and technical aspects,'' \emph{IEEE Commun. surveys Tuts.}, vol.~23, no.~2, pp. 1160--1192, 2021.

\bibitem{mao2017survey}
Y.~Mao, C.~You, J.~Zhang, K.~Huang, and K.~B. Letaief, ``A survey on mobile edge computing: The communication perspective,'' \emph{IEEE Commun. surveys Tuts.}, vol.~19, no.~4, pp. 2322--2358, Apr. 2017.

\bibitem{Mordor2022}
{Mordor Intelligence}, ``Moble edge computing market: Growth, trends, covid-19 impact, and forecast (2021-2026),'' \emph{online, https://www.mordorintelligence.com/industry-reports/mobile-edge-computing-market.}, 2021.

\bibitem{liu2017mobile}
H.~Liu, F.~Eldarrat, H.~Alqahtani, A.~Reznik, X.~De~Foy, and Y.~Zhang, ``Mobile edge cloud system: Architectures, challenges, and approaches,'' \emph{IEEE Syst. J.}, vol.~12, no.~3, pp. 2495--2508, Mar. 2017.

\bibitem{li2017qoe11}
C.~Li, L.~Toni, J.~Zou, H.~Xiong, and P.~Frossard, ``{QoE}-driven mobile edge caching placement for adaptive video streaming,'' \emph{IEEE Trans. on Multimedia}, vol.~20, no.~4, pp. 965--984, Apr. 2017.

\bibitem{ge2018qoe13}
C.~Ge, N.~Wang, W.~K. Chai, and H.~Hellwagner, ``{QoE}-assured 4k {HTTP} live streaming via transient segment holding at mobile edge,'' \emph{IEEE J. Sel. Areas Commun.}, vol.~36, no.~8, pp. 1816--1830, Aug. 2018.

\bibitem{tuysuz2020qoe}
M.~F. Tuysuz and M.~E. Aydin, ``{QoE}-based mobility-aware collaborative video streaming on the edge of {5G},'' \emph{IEEE Trans. on Industrial Info.}, vol.~16, no.~11, pp. 7115--7125, Nov. 2020.

\bibitem{he2020qoe16}
X.~He, H.~Lu, H.~Huang, Y.~Mao, K.~Wang, and S.~Guo, ``{QoE}-based cooperative task offloading with deep reinforcement learning in mobile edge networks,'' \emph{IEEE Wireless Commun.}, vol.~27, no.~3, pp. 111--117, Mar. 2020.

\bibitem{luo2019qoe33}
J.~Luo, X.~Deng, H.~Zhang, and H.~Qi, ``{QoE}-driven computation offloading for edge computing,'' \emph{J. of Syst. Architecture}, vol.~97, pp. 34--39, Aug. 2019.

\bibitem{sivasakthi2022qoe34}
D.~A. Sivasakthi and R.~Gunasekaran, ``{QoE}-aware mobile computation offloading in mobile edge computing,'' \emph{Concurrency and Computation: Practice and Experience}, vol.~34, no.~11, Nov. 2022.

\bibitem{liang2022online}
L.~Liang, J.~Qin, P.~Jiang, Z.~Chen, and Y.~Jia, ``An online adjustment based node placement mechanism for the {NFV}-enabled {MEC} network,'' \emph{Mobile Networks and Applications}, pp. 1--16, 2022.

\bibitem{jiang2021intelligence}
K.~Jiang, C.~Sun, H.~Zhou, X.~Li, M.~Dong, and V.~C. Leung, ``Intelligence-empowered mobile edge computing: Framework, issues, implementation, and outlook,'' \emph{IEEE Netw.}, vol.~35, no.~5, pp. 74--82, May 2021.

\bibitem{shang2022enabling}
X.~Shang, Y.~Huang, Y.~Mao, Z.~Liu, and Y.~Yang, ``Enabling {QoE} support for interactive applications over mobile edge with high user mobility,'' in \emph{Proc. IEEE INFOCOM}, Virtual, May 2022, pp. 1289--1298.

\bibitem{DD2017}
Y.~Bao, H.~Wu, and X.~Liu, ``From prediction to action: Improving user experience with data-driven resource allocation,'' \emph{IEEE J. Sel. Areas Commun.}, vol.~35, no.~5, pp. 1062--1075, Mar. 2017.

\bibitem{DG2016}
Y.~Bao, X.~Liu, and A.~Pande, ``Data-guided approach for learning and improving user experience in computer networks,'' in \emph{Asian Conf. on Machine Learning}, 2016, pp. 127--142.

\bibitem{ng2001discriminative}
A.~Ng and M.~Jordan, ``On discriminative vs. generative classifiers: A comparison of logistic regression and naive bayes,'' \emph{Advances in neural information processing systems}, vol.~14, 2001.

\bibitem{mao2015lyapunov}
Y.~Mao, J.~Zhang, and K.~B. Letaief, ``A {Lyapunov} optimization approach for green cellular networks with hybrid energy supplies,'' \emph{IEEE J. Sel. Areas Commun.}, vol.~33, no.~12, pp. 2463--2477, Dec. 2015.

\bibitem{bubeck2012regret}
S.~Bubeck, N.~Cesa-Bianchi \emph{et~al.}, ``Regret analysis of stochastic and nonstochastic multi-armed bandit problems,'' \emph{Foundations and Trends{\textregistered} in Machine Learning}, vol.~5, no.~1, pp. 1--122, 2012.

\bibitem{tong2022age}
J.~Tong, L.~Fu, and Z.~Han, ``Age-of-information oriented scheduling for multichannel iot systems with correlated sources,'' \emph{IEEE Trans. Wireless Commun.}, vol.~21, no.~11, pp. 9775--9790, Nov. 2022.

\bibitem{chen2022over}
Z.~Chen, J.~Tong, L.~Fu, and Z.~Han, ``Over-the-air computing aided federated learning and analytics via belief propagation based stochastic bandits,'' in \emph{Proc. IEEE Int. Conf. on Commun. (ICC)}, Virtual, Jun. 2022, pp. 2804--2809.

\bibitem{auer2010ucb}
P.~Auer and R.~Ortner, ``{UCB} revisited: Improved regret bounds for the stochastic multi-armed bandit problem,'' \emph{Periodica Mathematica Hungarica}, vol.~61, no. 1-2, pp. 55--65, Feb. 2010.

\bibitem{boyd2004convex}
S.~Boyd, S.~P. Boyd, and L.~Vandenberghe, \emph{Convex optimization}.\hskip 1em plus 0.5em minus 0.4em\relax Cambridge university press, 2004.

\bibitem{bracciale2020lyapunov}
L.~Bracciale and P.~Loreti, ``Lyapunov drift-plus-penalty optimization for queues with finite capacity,'' \emph{IEEE Commun. Lett.}, vol.~24, no.~11, pp. 2555--2558, Nov. 2020.

\bibitem{stochasticBook}
M.~Neely, \emph{Stochastic Network Optimization with Application to Communication and Queueing Systems}.\hskip 1em plus 0.5em minus 0.4em\relax Morgan \& Claypool Publishers, 2010.

\bibitem{auer2002using}
P.~Auer, ``Using confidence bounds for exploitation-exploration trade-offs,'' \emph{J. of Machine Learning Research}, vol.~3, no.~11, pp. 397--422, Nov. 2002.

\bibitem{youscore}
M.~J. Khokhar, T.~Spetebroot, and C.~Barakat, ``A methodology for performance benchmarking of mobile networks for internet video streaming,'' in \emph{Proceedings of the 21st ACM International Conference on Modeling, Analysis and Simulation of Wireless and Mobile Systems}, Montreal QC, Canada, Oct. 2018, pp. 217--225.

\end{thebibliography}
\end{document}